\documentclass[reqno]{amsart}

\usepackage{a4wide,amsrefs,amssymb}
\usepackage[colorlinks=false]{hyperref}
\usepackage{esint}


\title[Weak second Bianchi identity]{Weak second Bianchi identity\\ for static,
spherically symmetric spacetimes\\ with timelike singularities}


\author[A.~Burtscher]{Annegret Burtscher} 
\address{Department of Mathematics, Institute for Mathematics, Astrophysics and
Particle Physics (IMAPP), Radboud University, PO Box 9010, 6500 GL Nijmegen, The
Netherlands}
\email{\href{mailto:burtscher@math.ru.nl}{burtscher@math.ru.nl}}

\author[M.~Kiessling]{Michael K.-H.\ Kiessling} 
\address{Department of Mathematics, Rutgers, The State University of New Jersey,
110 Frelinghuysen Rd., Piscataway, NJ 08854-8019, United States of America}
\email{\href{mailto:miki@math.rutgers.edu}{miki@math.rutgers.edu}}

\author[A.S.~Tahvildar-Zadeh]{A.\ Shadi Tahvildar-Zadeh} 
\email{\href{mailto:shadi@math.rutgers.edu}{shadi@math.rutgers.edu}}


\makeatletter
\@namedef{subjclassname@2020}{\textup{2020} Mathematics Subject Classification}
\makeatother

\subjclass[2020]{83C50 (primary), 35Q75, 83C22, 70H40, 53B30, 53C80 (secondary)}

\keywords{second Bianchi identity, general relativity, energy-momentum conservation,
naked singularities, particles, electromagnetism}

\thanks{\textit{Acknowledgements}: We thank John Stalker for valuable discussions
following the completion of the first version of this manuscript. We are also
grateful to the anonymous referees for their constructive criticism.
AB acknowledges support of her research by the Dutch Research Council (NWO) grant number 
VI.Veni.192.208. Part of this material is
based upon work supported by the Swedish Research Council under grant no.\ 2016-06596
while AB was in residence at Institut Mittag-Leffler in Djursholm, Sweden during the
research program ``General Relativity, Geometry and Analysis: Beyond the first 100 
years after Einstein'' in Fall 2019.}


\date{\today}

\numberwithin{equation}{section}

\theoremstyle{plain}
\newtheorem{theorem}{Theorem}[section]
\newtheorem{corollary}[theorem]{Corollary}
\newtheorem{proposition}[theorem]{Proposition}

\newenvironment{customthm}[1]
  {\innercustomthm}
  {\endinnercustomthm}

\theoremstyle{definition}
\newtheorem{definition}[theorem]{Definition}
\newtheorem{example}[theorem]{Example}

\theoremstyle{remark}
\newtheorem{remark}[theorem]{Remark}


\def\RR{\mathbb{R}}
\def\SS{\mathbb{S}}

\newcommand{\g}{\ensuremath{\mathbf{g}}}
\newcommand{\F}{\ensuremath{\mathbf{F}}} 
\newcommand{\M}{\ensuremath{\mathbf{M}}} 
\newcommand{\T}{\ensuremath{\mathbf{T}}} 
\newcommand{\R}{\ensuremath{\mathbf{R}}} 
\newcommand{\bL}{\ensuremath{\mathbf{L}}} 
\newcommand{\A}{\ensuremath{\mathbf{A}}} 

\newcommand{\vol}{\operatorname{vol}}

\newcommand{\ddiv}{\operatorname{div}}

\newcommand{\mEL}{{m}_{\text{\textrm{e}}}}       
\newcommand{\eEL}{{e}_{\text{\textrm{e}}}}       

\newcommand{\Sset}{{\mathbb S}}
\newcommand{\bn}{\mathbf{n}}

\newcommand{\p}{\partial}
\newcommand{\cK}{{\mathcal K}}
\newcommand{\cM}{{\mathcal M}}
\newcommand{\cS}{{\mathcal S}}
\newcommand{\cT}{\mathcal{T}}
\newcommand{\al}{\alpha}
\newcommand{\ga}{\gamma}
\newcommand{\Ga}{\Gamma}
\newcommand{\Si}{\Sigma}
\newcommand{\bsi}{\boldsymbol{\sigma}}
\newcommand{\nab}{\nabla}
\newcommand{\half}{\frac{1}{2}}

\newcommand{\RWN}{{Reissner--Weyl--Nordstr\"om}}
\newcommand{\ppp}{{\prime\prime\prime}}
\newcommand{\pp}{{\prime\prime}}

\newcommand{\beq}{\begin{equation}}
\newcommand{\eeq}{\end{equation}}

\newcommand{\bna}{\begin{eqnarray}}
\newcommand{\ena}{\end{eqnarray}}
\newcommand{\bea}{\begin{eqnarray*}}
\newcommand{\eea}{\end{eqnarray*}}


\begin{document}


\begin{abstract}
The (twice-contracted) second Bianchi identity is a differential curvature identity that 
holds on any smooth manifold with a metric. In the case when such a metric is Lorentzian 
and solves Einstein's equations with an (in this case inevitably smooth) 
energy-momentum-stress tensor of a ``matter field'' as the source of spacetime curvature, 
this identity implies the physical laws of energy and momentum conservation for the 
``matter field''. The present work inquires into whether such a Bianchi identity can still
hold in a weak sense for spacetimes with curvature singularities associated with timelike
singularities in the ``matter field''. Sufficient conditions that establish a distributional 
version of the twice-contracted second Bianchi identity are found.
In our main theorem, a large class of spherically symmetric static
Lorentzian metrics with timelike one-dimensional singularities is identified, for
which this identity holds. As an important first application we show that the 
well-known Reissner--Weyl--Nordstr\"om spacetime of a point charge does not belong 
to this class, but that Hoffmann's spacetime of a point charge with negative bare mass in
the Born--Infeld electromagnetic vacuum does. 
\end{abstract}

\maketitle

\vspace{-1truecm}
\section{Introduction and Main Results}
\label{intro} 

\subsection{Motivation}

Einstein's equations for the spacetime metric\footnote{The signature of a Lorentzian
metric $\g$ is $(-,+,+,+)$. Greek indices $\mu,\nu$ etc.\ denote the components
$0,1,2,3$ of a tensor defined on the spacetime, with respect to a local coordinate
system $(y^\mu)_{\mu=0}^3$; however, Cartesian coordinates are denoted 
$(x^\mu)_{\mu=0}^3$. The coordinate vector fields are written 
$\partial_\mu=\frac{\partial}{\partial y^\mu}$. We use the Einstein summation
convention. To facilitate discerning the physical meaning of our results, we retain
$G$ and $c$.} $\g = (g_{\mu\nu})$ of a $3+1$-dimensional Lorentzian manifold 
$(\cM,\g)$ read
\beq\label{EM1}
 \R - \tfrac12 R \g = \tfrac{8\pi G}{c^4} \T[\F];
\eeq
here, $\R=(R_{\mu\nu})$ denotes the Ricci curvature tensor of the metric $\g$, 
$R = g^{\mu\nu}R_{\mu\nu}$ is its scalar curvature, $G$ is Newton's constant of
universal gravitation, and $c$ is the speed of light in vacuum. Moreover, 
$\T=(T_{\mu\nu})$ is the {\em energy-momentum-stress tensor} of any ``matter field''
$\F$ in (or associated with) the spacetime. By using this hybrid terminology of 
``matter field'' to cover models of continuum fluids, elastic solids, etc., as well 
as the electromagnetic field, we follow common practice in the general relativity 
community; from now on, we will drop the scare quotes. Einstein's equations~\eqref{EM1} 
often need to be complemented by evolution equations for the matter field $\F$.

For any sufficiently regular Lorentzian metric (classically, $\g\in C^3$), the
(twice-contracted) {\em second Bianchi identity}
\beq\label{Bi2}
\nab_\mu \left(R^\mu{}_\nu  - \tfrac12 R g^\mu{}_\nu \right) = 0
\eeq
holds; here, $\nab$  denotes the {Levi-Civita connection} associated with $\g$. As a
consequence, for any solution $(\g,\F)$ of \eqref{EM1} which is regular enough so
that this equation, as well as the equation obtained by differentiating both sides of
\eqref{EM1}, is satisfied pointwise, the identity \eqref{Bi2} implies the matter
field's local conservation laws of energy-momentum 
\beq\label{eq:conservationlaws}
\nab_\mu T^\mu{}_\nu = 0.
\eeq
Matter field equations must be compatible with \eqref{eq:conservationlaws},
therefore.

If $\F$ represents a perfect fluid with barotropic equation of state, then for
sufficiently regular evolutions (e.g., prior to any shock formation) the space
component of \eqref{eq:conservationlaws} is part of the equations of the fluid
evolution, to be complemented merely by the continuity equation for the fluid; the
time component of \eqref{eq:conservationlaws} is then redundant. On the other hand,
if $\F$ represents a source-free electromagnetic field, then
\eqref{eq:conservationlaws} does not furnish field evolution equations; they need to
be stated separately, compatible with \eqref{eq:conservationlaws}.

In a series of influential papers, \cites{EIH38,EI40,EI49}, Einstein and Infeld (EI),
originally joined by Hoffmann (EIH), claimed that the field equations of general
relativity theory, \eqref{EM1}, coupled with the Maxwell--Lorentz evolution equations
for the electromagnetic fields, determine the equations of motion of matter modeled
atomistically as composed of charged point particles, which they identified with
point singularities in spacelike slices of a spacetime. They actually discussed
mostly the special case of uncharged particles; a follow-up paper by Infeld's student
Wallace \cite{Wal41} supplied more details about their claim concerning charged point
singularities.

The idea that the world lines of point particles should be replaced by one-dimensional
timelike singularities of spacetime seems to go back to Weyl \cite{Wey21}. Already in
Weyl's writing, it is clear that such singularities are not subsets of the spacetime\footnote{Some 
care is thus required when one talks about causal properties of
singularities such as them being timelike, spacelike, or null, since these notions
can only be meaningful in a limiting sense.}. Thus, in this setup the ``world line of
a particle'' is not a path \emph{in} spacetime but a timelike one-dimensional singularity 
\emph{of} spacetime --- or put differently: an interior boundary of the spacetime --- which
needs to be determined along with the spacetime. {\em If} this setup can be
consistently implemented into general relativity, it produces spacetimes with 
one-dimensional timelike curvature singularities that have the appearance of world
lines of charged point particles, which are the sources and sinks of the
electromagnetic fields living in this spacetime. This is different from the usual
textbook story of ``test particle'' motion which, when uncharged, is given by a
timelike geodesic in a spacetime that is defined independently of the particle's
existence.

Although non-rigorous and full of questionable assumptions, and with conclusions
which cannot possibly be true in the sweeping generality in which they were stated,
the EIH papers have become the template for many formal follow-up calculations (for a
survey see, e.g., \cite{PoissonETal}), in particular the computation of gravitational
wave signals and their feedback on the motion of the sources (binary neutron stars or
black holes) used for the interpretation of the LIGO and VIRGO gravitational wave
data \cite{Bla2014}. As far as we can see, existing rigorous works on the problem of
motion for ``small bodies'' in general relativity (we mention in particular
\cites{GJ,QW,EG,St,GrWa,GrWaERR,GW,Harte}) do not yet allow a definitive assessment
of the merits of some of the key ideas of the papers by Einstein, Infeld, and
Hoffmann, and by Wallace, on the motion of what nowadays would be called naked
singularities. The purpose of the present paper is to take one step further toward
this goal.

A rigorous assessment would require one to consistently formulate an at least locally
well-posed joint evolution problem for spacelike slices of a spacetime, the
electromagnetic field defined on these slices, and the point singularities (in the
spatial curvature tensor) that represent the sources and sinks of that field.
Moreover, if the EIH and EI claims have any merit, then the equations of motion for
the point singularities in the spacelike slices must be a consequence of \eqref{EM1},
coupled with Maxwell's evolution equations for the electromagnetic fields $\F$ which,
however, need to be supplemented by a suitable law of the electromagnetic vacuum. It
is clear that such an electromagnetic law must be different from the usual Maxwell
vacuum law, for the latter leads to infinite electromagnetic field energy of the
point singularity, i.e., non-integrable singularities in the electromagnetic $\T$
that cause non-integrable curvature singularities of the metric, as per Einstein's
field equations \eqref{EM1} (see Section \ref{sec:Hamiltonians} for more details). 

To get an idea of the mathematical subtleties that could be involved, suppose that 
such a spacetime {$\cM$} with timelike one-dimensional singularities can be
continuously extended (just the manifold, not  the metric) into the location of these
singularities. In such a situation these one-dimensional timelike singularities
become proper particle world lines in this extended spacetime, and it is then
meaningful to express the energy-momentum-stress tensor $\T$ as a sum of a regular
and a singular part,
\beq
\T = \T^{\mbox{\tiny reg}} + \T^{\mbox{\tiny sing}},
\eeq
with $\T^{\mbox{\tiny reg}}$ sufficiently regular away from the world lines of the
point-charges, and $\T^{\mbox{\tiny sing}}$ the usual energy-momentum-stress tensor
of a system of point particles which is supported only on these world lines in a weak
sense as a measure. Now, if \eqref{Bi2}  holds in a weak sense it then follows
that \eqref{eq:conservationlaws} must hold in a weak sense as well, and hence
\beq\label{divTrPLUSdivTsISnull}
\nab\cdot \T^{\mbox{\tiny sing}} = -\nab\cdot\T^{\mbox{\tiny reg}}
\eeq
in the sense of distributions. In \cites{Kie19,Kie20,KieTah21} it has been shown that
in special-relativistic electromagnetic spacetimes (i.e., Newton's constant $G \to 0$
in \eqref{EM1}) with suitable electromagnetic vacuum laws the total (electromagnetic)
force on a charged point singularity, and its classical equation of motion, can be
extracted from \eqref{divTrPLUSdivTsISnull}. Furthermore, for the 
Bopp--Land\'e--Thomas--Podolsky (BLTP) vacuum law, the special-relativistic joint
initial value problem for point charges coupled with the electromagnetic 
Maxwell--BLTP fields is locally well-posed in time \cite{KieTah21}. Hence it is
reasonable to expect, possibly under further conditions on the behavior of $\F$, that
at least some of the well-posed special-relativistic joint initial value problems
that can be extracted from \eqref{divTrPLUSdivTsISnull} at $G=0$, can be continuously
extended to the general-relativistic domain when $G>0$. Which matter field models 
$\F$ qualify in this sense is an important open problem; we offer some remarks in the
last section.
 
Independent of the inquiry into suitable matter field models, the following is now
clear: For the establishment of the energy-momentum conservation law
\eqref{divTrPLUSdivTsISnull} when $G>0$ that could pave the ground toward a 
well-posed joint initial value problem for the spacelike slices of spacetime, the
electromagnetic and perhaps other matter fields in it, and their charged point 
singularities, along similar lines as in the special-relativistic formulation
mentioned above, it is necessary that the second Bianchi identity \eqref{Bi2} holds
in a weak sense. Thus, the key question is:

\noindent
\emph{Under which conditions on the metric of the spacetime does the weak second
Bianchi identity
\beq
   \int_{{\cM}} \left(R^{\mu}{}_\nu-\tfrac12 R g^\mu{}_\nu \right) \, \nabla_\mu 
   \psi^\nu \, d\vol_\g = 0
 \eeq
hold for all smooth, compactly supported vector fields $\psi$ defined on the
spacetime?}

Answering this question, in all its generality, is a big challenge, because a complete
classification of singularities of solutions of Einstein's equations seems currently 
out of reach. For example, two timelike singularities in a given spacetime can be
vastly different in terms of strength, in the sense that a curvature invariant may
blow up at two very different rates for them.

Our strategy is to begin by restricting the key question to special families of
spacetimes, incrementally becoming more general. There are many explicit solutions of
Einstein's equations where the causal structure is simple enough that everything can
be worked out explicitly and the singular behavior can be fully analyzed; in
particular, we mention distributional approaches in
\cites{GT,HS,HK,LFM,L1,L2,S,SV,Taub}. Such model cases can give us clues as to what
the sufficient conditions {are} for a spacetime singularity to represent the world
line of a particle, and which type of ``atomic matter'' models can accommodate such
singularities. We are particularly interested in ``electromagnetic matter'', whose
electromagnetic field satisfies the pre-metric Maxwell's equations, complemented with
a suitable electromagnetic vacuum law, and with charged sources given by a finite
number of one-dimensional timelike singularities that are assigned an 
energy-momentum-stress tensor in the spirit of EIH, and Wallace.

\subsection{Setting}
\label{sec:setting}
 
While a number of ideas developed below are clearly adaptable to more general
situations, in the present paper we focus our efforts on {\em static spherically
symmetric} spacetimes that feature a single timelike singularity, with special
emphasis given to electrostatic spacetimes of a single point charge at the center of
symmetry. These are four-dimensional Lorentzian manifolds $(\cM,\g)$ on which there
exists a \emph{global} system of coordinates $(t,r,\vartheta,\varphi)$ such that the
line element of the metric $\g$ can be written as
\beq\label{SSSS}
ds_\g^2 = -e^{2\al(r)} c^2dt^2 + e^{2\beta(r)} dr^2 + r^2 (d\vartheta^2 + 
\sin^2\vartheta d\varphi^2).
\eeq
Thus $\p/\p t$ is a timelike Killing field, $r>0$ is the area-radius coordinate, and
$(\vartheta,\varphi)$ are spherical coordinates on the standard sphere $\Sset^2$.

It is common knowledge that many of the known solutions of Einstein's equations
\eqref{EM1} have analytical extensions that feature geometric singularities
associated with geodesic incompleteness and/or curvature blow-up. Famous examples are
the Schwarzschild solution, both in the positive mass (black hole) as well as in the
negative mass (naked singularity) sector, and the charged \RWN\ solution in the
superextremal (naked) as well as extremal and subextremal sectors (black holes). It
is in fact a theorem \cite{TZ1} that there are no static, spherically symmetric,
asymptotically \RWN\ electrovacuum spacetimes whose maximal extension is devoid of
singularities. 

The negative mass Schwarzschild (nmS) solution and, in the superextremal (naked)
sector, also the \RWN\ (RWN) spacetimes possess global coordinate systems in which
their metric has the form \eqref{SSSS}.  In these coordinates, there is a severe 
curvature singularity on the timelike line $r=0$, so that $\cM$ as a Lorentzian
manifold is diffeomorphic to $\RR^4$ minus a line, or 
$\RR\times (\RR^3\setminus\{0\})$. The fact that the singular set of these and many
other spacetimes is of codimension three creates complications for the study of the
geometry of these manifolds in the neighborhood of their singularities, including the
question of whether or not it is possible to formulate a weak version of the second
Bianchi identity for these singular spacetimes.

One approach taken by differential geometers that has been fruitful in this regard is
the use of a different coordinate system on these manifolds, one which ``blows up'' a
neighborhood of the singular set in such a way that in the new coordinates, one has a
manifold with a codimension-one boundary, with its own intrinsic smooth geometry,
thereby allowing for tools of geometric analysis to be applied to it. Examples of
this approach can be found in the Riemannian setting in the works of Bray \cite{Bray}, 
Bray--Jauregui \cite{BJ} and others. We describe this geometric approach
in Section~\ref{sec:ZAScoord} and a corresponding notion of mass for these
codimension one boundaries in Section~\ref{sec:mass}. Finally, in
Section~\ref{sec:mainresults} of this introduction we state and discuss the main
results of this paper regarding the weak second Bianchi identity and applications,
which makes use of this geometric formulation.

\subsection{Spatial conformally flat coordinates and Zero-Area Singularities (ZAS)}
\label{sec:ZAScoord}

Let $(\cM,\g)$ be a four-dimensional static, spherically symmetric Lorentzian
manifold, diffeomorphic to $\RR\times(\RR^3\setminus\{0\})$, which has a global
spherical coordinate system $(t,r,\vartheta,\varphi)$ defined on it in which the metric
$\g$ has the form \eqref{SSSS}. Suppose we can transfer to a new coordinate 
$\rho \in (\rho_0,\infty)$, $\rho_0>0$ such that 
\beq\label{eq:isotropic}
ds_\g^2 = - e^{2\ga(\rho)} {c^2} dt^2 + \phi^4(\rho)\left[ d\rho^2 + \rho^2 
(d\vartheta^2 +\sin^2\vartheta d\varphi^2)\right].
\eeq
We call the coordinate system $(t,\rho,\vartheta,\varphi)$ in \eqref{eq:isotropic} 
\emph{spatially conformally flat coordinates}. Note that the metric inside the
brackets is simply the Euclidean metric on $\RR^3$ in spherical coordinates.
Comparing to \eqref{SSSS} we see that it is necessary that
\beq
\phi^2 = \frac{r}{\rho},\qquad e^\beta dr = \phi^2 d\rho.
\eeq
Solving the differential equation, we thus require
\beq\label{def:rho}
\rho(r):= \rho_0 \exp\left(\int_0^r e^{\beta(r')} \frac{dr'}{r'} \right),
\eeq
assuming that the integral exists (we will see that it does for the manifolds of
interest to us). Clearly, $\rho(0) = \rho_0$. Moreover, $\rho$ as defined in 
\eqref{def:rho} is an increasing function of $r$, and hence invertible, which
determines $\phi$ and $\ga$ as
\beq 
\phi(\rho) := \sqrt{{r(\rho)}/{\rho}},\qquad \ga(\rho) := \al(r(\rho)).
\eeq
The manifold $\cM$ in these coordinates is diffeomorphic to the exterior of the solid
cylinder, i.e., $\RR\times(\RR^3\setminus  \overline{B_{\rho_0}(0)})$.
We note that $\phi(\rho_0)=0$ and therefore by \eqref{eq:isotropic} the interior boundary 
$\cT = \p \cM$ is a singular boundary. In particular, the intrinsic area of the sphere 
$\p B_{\rho_0}(0)$ (which can be computed by a limiting process, see \cite{Bray}) is zero. 
The constant-$t$ slices of $\cM$ are spacelike hypersurfaces diffeomorphic to the 
exterior $\Si$ of the open ball $B_{\rho_0}(0)$ in $\RR^3$.
We can write $\cM = \RR\times\Si$, and $\cS$ for the interior boundary $\p \Si$. 
The surface $\cS$ is therefore an example of a {\em Zero Area Singularity} (ZAS) for 
the Riemannian manifold $\Si$.

In his pioneering work \cite{Bray} on ZAS, Bray defined a notion of mass for such 
singularities, and studied its properties. In particular he showed that this mass is 
coordinate invariant and always {\em negative}.  In this work we will connect Bray's notion 
of the mass of a ZAS with our notion of the (negative) bare mass of the
central singularity in the spacetimes discussed in this paper.

In the rest of this section we derive sufficient conditions for obtaining a
coordinate transformation of the form~\eqref{eq:isotropic}. Let us recall that for
spherically symmetric spacetimes, with $r$ denoting the area-radius, one can define
the {\em cumulative mass function} $m(r)$ via the relation
\beq
1 - \frac{2Gm(r)}{c^2 r} = {g^{\mu\nu}}\p_\mu r \p_\nu r.
\eeq
Thus in terms of the metric coefficients in \eqref{SSSS},
\beq\label{cummass} 
m(r) := \frac{c^2}{2G} r\left(1 -  e^{-2\beta(r)}\right).
\eeq
Using the mass function we can rewrite \eqref{def:rho} as
\beq\label{eq:rhomass}
\rho(r) = \rho_0 \exp\left( \int_0^r \frac{dr'}{\sqrt{r'^2 - \frac{2G}{c^2}r'm(r')}}\right)
\eeq
Let us assume that $m: (0,\infty) \to \RR$ is a $C^1$ function, with the asymptotics
\beq\label{asymp:m}
 m(r) \sim \left\{\begin{array}{ll} m_0 + m_1 r & r \to 0,\\
M-\frac{M_1}{r} & r \to \infty,\end{array}\right.
\eeq
for constants $m_0 < 0$, $M\in\RR$, $m_1, M_1>0$.  Thus $m_0 = \lim_{r\to 0} m(r)$ 
and $M = \lim_{r\to\infty} m(r)$.  Moreover, for the denominator in 
\eqref{eq:rhomass} we have
\beq
r^2 - \frac{2G}{c^2} r m(r) \sim A r + B r^2 \mbox{ as } r\to 0,
\eeq
with
\beq
A := \frac{-2Gm_0}{c^2} >0,\qquad B := 1 - \frac{2Gm_1}{c^2},
\eeq
where $B\in(0,1)$ if $m_1$ is sufficiently small. Under these assumptions the
integral in \eqref{eq:rhomass} is seen to be finite and $\rho$ is well-defined. We
will mostly be interested in spacetimes where \eqref{asymp:m} holds, with the stated
range of the parameters.

\subsection{Mass of a regular ZAS}
\label{sec:mass}

Let $(\Si,\bsi)$ be  a 3-dimensional Riemannian manifold with boundary, and let 
$\cS_0 \subseteq \p\Si$ be a ZAS for $\Si$. According to Bray, a ZAS is called 
{\em regular} if it can be conformally extended, i.e., if there exists a smooth
nonnegative function $\bar{\phi}$ defined in a neighborhood of $\cS_0$ in $\Si$, and
a smooth metric $\bar{\bsi}$ such that
\begin{enumerate}
\item $\bar{\phi}= 0$ on $\cS_0$.
\item $\bar{\bn}(\bar{\phi})>0$ where $\bar{\bn}$ is the unit normal of $\cS_0$ with 
respect to $\bar{\bsi}$.
\item $\bsi = \bar{\phi}^4 \bar{\bsi}$.
\end{enumerate}
The prototype example of a regular ZAS is the singularity at the center of the
spacelike time-slices of the negative mass Schwarzschild (nmS) spacetime, which is
the unique static, spherically symmetric, asymptotically flat solution of vacuum
Einstein equations whose central singularity is not shielded by a horizon.  For this
manifold, $m(r) \equiv m_0 < 0$,  and it is easy to see that the coordinate $\rho$
as defined in Section~\ref{sec:ZAScoord} above is a global coordinate: if we set
\beq \rho_0 := \frac{G|m_0|}{2c^2}
\eeq
then the change of coordinate $r \leftrightarrow \rho$ is given by
\beq
\rho = \half \left( r + 2\rho_0 + \sqrt{ (r+2\rho_0)^2 - 4\rho_0^2}\right),\qquad
r = \frac{(\rho - \rho_0)^2}{\rho}.
\eeq
Moreover, the function $\bar{\phi}$ in the definition of regular ZAS exists and is
\beq
\bar{\phi}(\rho) = 1 - \frac{\rho_0}{\rho},
\eeq
while the metric $\bar{\bsi}$ is the Euclidean metric on 
$\RR^3\setminus B_{\rho_0}(0)$.

For ZAS that are regular, Bray~\cite{Bray} defined a notion of mass by 
\beq
m_\mathrm{reg}(\cS_0) := -\frac{1}{4} \left( \frac{1}{\pi}\int_{\cS_0} \bar{\bn}
(\bar{\phi})^{4/3} dS_{\bar{\bsi}}\right)^{3/2},
\eeq
where $dS_{\bar \bsi}$ denotes the surface element with respect to $\bar \bsi$. Note
that this mass is always negative. Let $(\cS_n)_n$ be a sequence of closed $C^2$
surfaces in $\Si$, each one being a graph over $\cS_0$, that shrink down to $\cS_0$
as $n\to\infty$. It can be shown \cite{Bray} that 
\beq
m_\mathrm{reg}(\cS_0) = \lim_{n\to \infty} m_\mathrm{H}(\cS_n),
\eeq
where $m_\mathrm{H}$ denotes the {\em Hawking mass}
\beq
m_\mathrm{H}(\cS) := \sqrt{\frac{|\cS|_{{\bsi}}}{16\pi}} \left( 1 - \frac{1}{16\pi}
\int_\cS H^2 dS_{\bsi} \right).
\eeq
Here $H$ denotes the mean curvature of the surface $\cS$ in $\Si$. Note that for the
Hawking mass to be well-defined the metric $\bsi$ and the surface $\cS$ need to be at
least $C^2$.

If the three-dimensional manifold $\Si$ is spherically symmetric, it is not hard to
see that the Hawking mass of the sphere $S_r$ of area-radius $r>0$ in the manifold is
equal to $m(r)$ obtained in \eqref{cummass}. It thus follows that for the spacetimes
$(\cM,\g)$ of interest in this paper, which admit a metric of the form
\eqref{eq:isotropic} such that the interior boundary $\cS_0$ of the time-slices is a
ZAS, the bare mass of the singularity is equal to its Hawking mass and to its Bray 
mass, i.e.,
\beq
m_0 = \lim_{r \to 0} m(r) = \lim_{\rho \to \rho_0} m_\mathrm{H}(S_\rho) = 
m_\mathrm{reg}(\cS_0).
\eeq

Besides the nmS spacetime that contains such a ZAS singularity, we introduce in
Section~\ref{prototype} a particular asymptotically flat, electrovacuum spacetime.
This prototype spacetime represents the vacuum outside a static point charge in a
nonlinear electromagnetic theory with an admissible reduced Hamiltonian as discussed
in some detail also in Section~\ref{sec:mainresults} below. Most importantly, we will
show that the central singularity of this electrovac spacetime is of the same
strength as the singularity of the nmS, i.e., it has finite negative bare mass, and
is therefore a regular ZAS.

\subsection{Main Results}
\label{sec:mainresults}

We are ready to state our results about the weak analogue of the twice-contracted
second Bianchi identity \eqref{Bi2}, which in a suitable weak sense also makes sense
on certain spacetimes with a single timelike singularity. In this paper, we prove the
following result.

\begin{theorem}\label{thm:bianchi}
Let $\cM\cong\RR \times (\RR^3 \setminus \overline{B_{\rho_0}(0)})$ be equipped with
a static, spherically symmetric Lorentzian metric $\g$ of the form 
\[
ds_\g^2 = - e^{2\ga(\rho)} c^2 dt^2 + \phi^4(\rho)\left[ d\rho^2 + \rho^2 (d\vartheta^2 
+\sin^2\vartheta d\varphi^2)\right],
\]
for spatially conformally flat coordinates $(t,\rho,\vartheta,\varphi)$. Assume
furthermore that, as $\rho \downarrow \rho_0$,
\begin{enumerate}
  \item $\phi(\rho) = O(\rho-\rho_0)$, $\phi'(\rho) = O(1)$, 
  \item $e^{\gamma(\rho)} = O((\rho-\rho_0)^{-1})$, $\gamma'(\rho) 
  = O((\rho-\rho_0)^{-1})$, and 
  \item $G^\mu{}_\nu(\rho) = O((\rho-\rho_0)^{-5+\kappa})$, for some $\kappa>0$.
\end{enumerate}
Then the second Bianchi identity is satisfied weakly, in the sense that
 \beq\label{bintA}
   \int_\cM G^\mu{}_\nu \nabla_\mu \psi^\nu d\vol_\g = 0,
 \eeq
for any compactly supported vector field\footnote{Note that the vector field can have
support on the inner boundary of the manifold.} 
$\psi \in \mathfrak{X}_c(\overline{\cM})$.
\end{theorem}

Theorem~\ref{thm:bianchi} is our starting point for an in-depth analysis of well-known static,
spherically symmetric solutions of the Einstein equations with singularities.
Our condition (iii) is a mild condition suggested by our method of
proof, but may seem a bit unnatural because it involves the independent metric functions 
$\gamma$ and $\phi$ and their first and second derivatives. In Corollary~\ref{thm:bianchi2} in Section~\ref{secBIANCHI} we replace condition (iii) in Theorem~\ref{thm:bianchi} by stronger 
conditions (i') on $\phi$ and its first and second derivatives, and (ii') on 
$\gamma$ and its first and second derivatives.  In Corollary~\ref{cor1} a special 
result is also obtained for the simpler case when the metric exponents satisfy $\beta=-\alpha$ in
\eqref{SSSS} (cf.\ \cite{TedJ}) and the cumulative mass function $m$ has the Taylor expansion 
assumed in \eqref{asymp:m} as $r \downarrow 0$.

Interestingly, we find that the superextremal Reissner--Weyl--Nordstr\"om solution,
i.e., the well-known spherically symmetric, asymptotically flat solution of the
Einstein--Maxwell--Maxwell\footnote{The first ``Maxwell'' here stands for the 
pre-metric Maxwell field equations, the second ``Maxwell'' for  Maxwell's
electromagnetic vacuum law [which he called ``law of the pure ether'']. In the same
vein we will speak of Einstein--Maxwell--Born--Infeld system, etc.} system with a
timelike central singularity, does \emph{not} satisfy the second Bianchi identity
weakly at the center. This is not surprising since the mass function of the RWN
solution goes to $-\infty$ as $r\downarrow 0$ (the conditions \eqref{asymp:m} are of
course only sufficient and not necessary, but we will also rigorously establish that
\eqref{bintA} does not hold).
 
In Section~\ref{sec:Hamiltonians} of this paper we investigate which properties of
the Einstein--Maxwell--Maxwell system are problematic by comparing RWN to spacetimes
of a point charge in different electromagnetic vacua, in particular the Hoffmann
spacetime solution of the Einstein--Maxwell--Born--Infeld (EMBI) system, for which we
show that the second Bianchi identity is satisfied weakly at the singularity. We next
discuss an application of Theorem~\ref{thm:bianchi} to suitable electromagnetic vacua
informally (full details are contained in Section~\ref{sec:Hamiltonians}).

In \cite{TZ1} a particular subclass of admissible\footnote{This particular notion of
admissibility is fully laid out in Section \ref{sec:Hamiltonians}.} electromagnetic
Lagrangians was identified with the property that the corresponding spherically
symmetric, asymptotically flat, electrostatic spacetime metrics have the 
{\em mildest} possible singularity at their center, namely, a conical singularity on
the time axis. In the setting of \cite{TZ1} this is the case only if the bare rest
mass vanishes, i.e., $m_0 = 0$.

In the present work we drop this restriction and allow a nonvanishing bare mass
$m_0$. In fact, since with EIH and Wallace we are interested in a timelike naked
singularity at the center, we need to admit \emph{negative} $m_0$. Note that in the
special-relativistic electrodynamical setting of \cite{KieTah21} the problem is
overdetermined when the bare mass of the particles vanishes, but is well-posed with
nonzero bare mass of either sign. In the general-relativistic setting we expect that
a naked singularity with strictly positive bare mass to be impossible, though (see
Section~\ref{secEMSTmild}). Put differently, we expect that a timelike singularity
with strictly positive bare mass can only exist inside a black hole. Although
timelike singularities with negative bare mass can exist in a black hole, too, they
can also be naked.
 
This generalization opens the door to much more severe than conical, but nevertheless
much weaker singularities than the one at the center of superextremal RWN spacetime.
One key quantity to measure the different degrees of severity of  singularities is 
the Kretschmann scalar. We know that, as $r \to 0$, the Kretschmann scalar is
proportional to $r^{-4}$ in the case of conical singularities studied in \cite{TZ1}.
We will see that it is of order $r^{-6}$ in the case of {\em admissible} reduced
Hamiltonians (as defined in Section~\ref{sec:Hamiltonians}), and that it blows up
like $r^{-8}$ for the RWN solution. Due to the behavior of the cumulative mass
function $m(r)$ at the singularity obtained in Proposition~\ref{lem:mdecay} and
confirming our assumption \eqref{asymp:m}, the restrictions on the reduced
Hamiltonian guarantee that the second Bianchi identity holds weakly everywhere,
including at the singularity.
 
\begin{theorem}\label{thm:es}
Suppose $(\cM,\g,\F)$ is an electrostatic spherically symmetric spacetime for an
admissible reduced Hamiltonian\footnote{Rigorously defined and motivated in
Section~\ref{sec:Hamiltonians}, see Definition~\ref{zeta} on page~\pageref{zeta}.},
and that the bare mass of the central singularity is negative. 
 Then the twice-contracted second Bianchi identity is satisfied weakly.
\end{theorem}

This result is in stark contrast to the RWN solution for which not only the
cumulative mass function (and hence the energy inside a sphere of area $4\pi r^2$)
diverges to minus infinity when $r\downarrow 0$, but even the weak version of the
second Bianchi identity fails. As such, the nonlinear electromagnetic theories
obtained through a Lagrangian formulation are better suited to model static
spacetimes of a charged point particle. We expect that these results can be extended
also to non-symmetric, non-static solutions with several point
charges, or ring singularities, etc.
 
\subsection{Outline}

The rest of this paper is organized as follows.

In Section~\ref{secBIANCHI}, we prove our Main Theorem~\ref{thm:bianchi}. We derive our
sufficient criterion for when the twice-contracted second Bianchi identity holds weakly. 
The assumption of strictly negative bare mass is required for
Corollary~\ref{cor1} of Theorem~\ref{thm:bianchi}, which leads the way to Theorem~\ref{thm:es}.

In Section~\ref{sec:Hamiltonians}, we investigate the Einstein--Maxwell system for a
large family of (nonlinear) electromagnetic vacuum laws. For that we give a precise
formulation of the admissible reduced Hamiltonians and their application to
Theorem~\ref{thm:es}. We will also prove rigorously that the weak second Bianchi
identity does not hold for the well-known RWN metric.

In Section~\ref{summary}, we conclude with a summary and an outlook on possible
extensions of our results. We in particular show by direct computation that the weak
second Bianchi identity holds for a family of singular fluid solutions with vanishing
bare mass, showing that our conditions of our main theorem are not necessary.

 
\section{The weak second Bianchi identity on a spacetime with timelike singularity}
\label{secBIANCHI}

Throughout this section we assume that $\overline{\cM}$ is a manifold (with an
interior boundary $\cT$) that is diffeomorphic to $\RR^4$, equipped with a Lorentzian metric
$\g$ on a part of $\overline{\cM}$ that is diffeomorphic to $\RR \times (\RR^3 {\setminus}\{0\})$.
The part of $\overline{\cM}$ where we are given a Lorentzian metric is denoted by $\cM$. 
In other words, we assume $(\cM,\g)$ to be extendible to a manifold $\overline{\cM}$ with 
an interior one-dimensional timelike boundary $\cT$ that is diffeomorphic to $\RR \times \{0\}$ 
for $0 \in \RR^3$, and assume that $\g$ is a (sufficiently) 
smooth metric tensor on the interior $\cM = \overline{\cM} \setminus \cT$. In our
cases of interest, $(\cM,\g)$ will be generally inextendible to $\overline{\cM}$ as a
sensible Lorentzian manifold (e.g., due to curvature blow-up at a naked singularity)
but our results in this section apply more broadly also to scenarios where $\g$ is 
extendible in some low-regularity fashion (e.g., as continuous metric).

Recall that the classical second Bianchi identity holds on any smooth semi-Riemannian manifold
and thus implies the standard twice-contracted second Bianchi identity~\eqref{Bi2} for the Einstein
tensor pointwise \emph{away} from the singularity $\cT \cong \RR \times \{0\}$. In this section we
derive a distributional Bianchi identity involving the behavior \emph{at} the singularity/boundary 
$\cT$ in a manifold-with-boundary setting following the coordinate blow-up approach 
introduced already in Section~\ref{sec:ZAScoord}. In Theorem~\ref{thm:bianchi} and
Corollaries~\ref{thm:bianchi2} and \ref{cor1} we provide conditions when this weak second Bianchi 
identity, defined in Definition~\ref{def:bianchi}, holds in the static spherically symmetric 
setting of our interest.

\medskip
As discussed in Section~\ref{sec:ZAScoord} we assume that we can write a 
four-dimensional static, spherically symmetric Lorentzian manifold $(\cM,\g)$, which
is diffeomorphic to $\RR \times (\RR^3 \setminus \{ 0 \})$,
using coordinates that are spatially conformally flat.
For $\rho_0>0$ and the new coordinate $\rho \in (\rho_0,\infty)$ we obtain 
\beq\label{gagain}
  ds^2_\g = - e^{2\gamma(\rho)} c^2 dt^2 + \phi^4(\rho)
 [ d\rho^2  +\rho^2 (d\vartheta^2   + \sin^2\vartheta d\varphi^2)].
\eeq
In particular, we view $\cM$ as 
$\RR \times (\RR^3 \setminus \overline{B_{\rho_0}(0)})$ and $\cT$ as 
$\RR \times \p B_{\rho_0}(0)$.

To formulate the second Bianchi identity at the singularity 
$\{r=0\} \cong \{ \rho = \rho_0\}$ in a meaningful way, we first note that {\em if} 
$\mathbf{G}$ were a smooth tensor field defined on the manifold 
$\overline{\cM} \cong \RR \times (\RR^3 \setminus B_{\rho_0}(0))$ \emph{including}
the interior boundary $\cT \cong \RR \times \p B_{\rho_0}(0)$, then \eqref{Bi2},
together with integration by parts and Stokes' Theorem, implies that for any smooth
compactly supported vector field $\psi$ we have
\bna
 0 &=&  \int_{\cM} \psi^\nu \, \nabla_\mu G^{\mu}{}_{\nu} \, d\vol_\g 
    =   \int_{\cM} \nabla_\mu (\psi^\nu G^{\mu}{}_\nu) \, d\vol_\g - \int_{\cM} 
   G^{\mu}{}_\nu \, \nabla_\mu \psi^\nu \, d\vol_\g \nonumber \\
   &=& \int_{\cT} i_{\psi^\nu G^{\mu}{}_\nu} (d\vol_\g) - \int_{\cM} G^{\mu}{}_\nu \, 
   \nabla_\mu \psi^\nu \, d\vol_\g 
   = - \int_{\cM} G^{\mu}{}_\nu \, \nabla_\mu \psi^\nu \, d\vol_\g \label{motivation},
\ena
where the last equality follows from the fact that $\psi$ and $\mathbf{G}$ are
smooth, hence constant on $\partial B_{\rho_0}$, and $\partial B_{\rho_0}$ is a ZAS.
Note that Stokes can be applied here, since the support of $\psi$ is also bounded in $t$.
This motivates us to define the weak formulation of the
twice contracted second Bianchi identity in terms of spacetime integration against
test vector fields.

In more singular situations where $\mathbf{G}$ and thus the interior boundary
integral $\int_\cT$ in \eqref{motivation} may not be well-defined (e.g., in the case
of curvature blow-up) we thus seek an inhomogeneous version of the identity, namely
\[
  \int_\cM G^\mu{}_\nu \nabla_\mu \psi^\nu \, d\vol_\g = \lim_{\varepsilon \to 0} 
  \int_{{\cT}_\varepsilon} i_{\psi^\nu G^{\mu}{}_\nu} (d\vol_\g),
\]
that should furthermore equal zero in the case of a ZAS. Here, 
$(\cT_\varepsilon)_\varepsilon$ with 
$\cT_\varepsilon \cong \RR \times \p B_{\rho_0+\varepsilon}(0)$ is the net converging
to the singularity (compare this to the notation of the Riemannian ZAS 
$\cS_0 \subseteq \p\Sigma$ and its mass in Section~\ref{sec:mass}).

We thus rigorously define a weak version of the second Bianchi identity (for 
$\mathbf{G}$) as follows using an approximation of $\cT$. While in this work we only
focus on the situation where $\cT$ consists of a singular one-dimensional timelike
singularity in the center, it is clear that an analogous definition can be used for
multiple such singularities, or other more general types of interior boundaries.

\begin{definition}\label{def:bianchi}
Let $(\cM,\g)$ be a smooth four-dimensional static, spherically symmetric Lorentzian
manifold, which is diffeomorphic to 
$\RR \times (\RR^3 \setminus \overline{B_{\rho_0}(0)})$ with spatially conformally
flat coordinates $(t,\rho,\vartheta,\varphi)$ (see Section~\ref{sec:ZAScoord}). We say
that the \emph{inhomogeneous twice-contracted second Bianchi identity holds weakly on
$\cM$} if for the Einstein tensor $\mathbf{G} = (G^\mu{}_\nu)$ of $\g$ and for any
compactly supported vector field $\psi \in \mathfrak{X}_c(\overline{\cM})$ the
integral
\beq\label{bianchi1}
 \int_{\cM} G^\mu{}_\nu \nabla_\mu \psi^\nu d\vol_\g
\eeq
exists, and equals
\beq\label{bianchi2}
 \lim_{\varepsilon \to 0} \int_{{\cT}_\varepsilon} i_{\psi^\nu G^\mu{}_\nu} (d\vol_\g),
\eeq
where $\cT_\varepsilon \cong \RR \times \p B_{\rho_0 + \varepsilon}(0)$. We say that
the \emph{twice-contracted second Bianchi identity holds weakly} if 
$\eqref{bianchi1}=\eqref{bianchi2}=0$.
\end{definition}

We can now prove the main result of this paper.

\begin{customthm}{\ref{thm:bianchi}}
Let $\cM \cong \RR \times (\RR^3 \setminus \overline{B_{\rho_0}(0)})$ be equipped
with a static, spherically symmetric Lorentzian metric $\g$ of the form
\eqref{gagain}, i.e., for spatially conformally flat coordinates 
$(t,\rho,\vartheta,\varphi)$ the metric tensor $\g$ is given by
\[
ds_\g^2 = - e^{2\ga(\rho)} c^2 dt^2 + \phi^4(\rho)\left[ d\rho^2 + \rho^2 
(d\vartheta^2 +\sin^2\vartheta d\varphi^2)\right].
\]
Assume furthermore that, as $\rho \downarrow \rho_0$,
\begin{enumerate}
  \item $\phi(\rho) = O(\rho-\rho_0)$, $\phi'(\rho) = O(1)$,
  \item $e^{\gamma(\rho)} = O((\rho-\rho_0)^{-1})$, $\gamma'(\rho) = O((\rho-\rho_0)^{-1})$, 
  and 
  \item $G^\mu{}_\nu(\rho) = O((\rho-\rho_0)^{-5+\kappa})$ for some $\kappa > 0$.
\end{enumerate}
Then
\beq\label{bint}
   \int_\cM G^\mu{}_\nu \nabla_\mu \psi^\nu d\vol_\g = - 4\pi \rho_0^2 c
   \int_{-\infty}^\infty \psi^\rho({t,}\rho_0) \, dt \cdot \lim_{\varepsilon \to 0} 
   G^\rho{}_\rho (\rho_0+\varepsilon) e^{\gamma(\rho_0+\varepsilon)} 
   \phi(\rho_0+\varepsilon)^6 = 0,  
\eeq
for all vector fields $\psi \in \mathfrak{X}_c(\overline{\cM})$ with $\psi(t,\rho_0)$
being independent\footnote{While this restriction is not necessary, it makes sense
because we are just artificially blowing up the one-dimensional singularity 
$\cT \cong \RR \times \{0\}$.} of angular components $(\vartheta,\varphi)$,
and thus the second Bianchi identity is satisfied weakly in the sense of Definition~\ref{def:bianchi}.
\end{customthm}

\begin{proof}
We first show that for any compactly supported vector field 
$\psi \in \mathfrak{X}_c(\overline{\cM})$ the integral
 \beq\label{exi}
  \int_\cM G^\mu{}_\nu \nabla_\mu \psi^\nu \, d\vol_\g
 \eeq
exists. Note that in spatially conformally flat coordinates 
$(t,\rho,\vartheta,\varphi)$, the volume element reads
\[
 d\vol_\g = c e^\gamma \phi^6 \rho^2 \sin \vartheta dt \wedge d\rho \wedge d\vartheta 
 \wedge d\varphi = c e^\gamma \phi^6 dV^4,
\]
where $dV^n$ denotes the Euclidean volume form on $\RR^n$. 
The nonvanishing Christoffel symbols are
\beq \label{Christ} \begin{array}{llll}
 \Ga^{t}_{\rho t} = \gamma' & \Ga^{\rho}_{tt} =  \frac{c^2 e^{2\gamma} \gamma'}{\phi^4} 
 & \Ga^{\rho}_{\rho\rho}= \frac{2\phi'}{\phi} &
 \Ga^{\vartheta}_{\vartheta\rho} =  \Ga^{\varphi}_{\varphi\rho} = \frac{1}{\rho} + 
 \Ga^{\rho}_{\rho\rho} \\
 \Ga^{\rho}_{\vartheta\vartheta}= - \rho^2 \, \Ga^{\vartheta}_{\vartheta\rho} & 
 \Ga^{\rho}_{\varphi\varphi} = \sin^2 \vartheta \, \Ga^{\rho}_{\vartheta\vartheta} & 
 \Ga^{\varphi}_{\varphi\vartheta} = - \cot \vartheta &
 \Ga^{\vartheta}_{\varphi\varphi} = - \sin^2 \vartheta \, \Ga^{\varphi}_{\varphi\rho}.
 \end{array}
\eeq
For proving the existence of the integral \eqref{exi} consider each of the summands
in $\phi^6 G^\mu{}_\nu \nabla_\mu \psi^\nu$ separately. Since $\psi$ is smooth on 
$\overline{\cM}$, all derivatives are bounded, and it remains to consider the contributions of
the Christoffel symbols in $\nabla_\mu \psi^\nu$. Now, 
\[
  G^\mu{}_\nu \nabla_\mu \psi^\nu = G^t{}_t \nabla_t \psi^t + G^\rho{}_\rho \nabla_\rho 
  \psi^\rho + G^\vartheta{}_\vartheta \nabla_\vartheta \psi^\vartheta +G^\varphi{}_\varphi 
  \nabla_\varphi \psi^\varphi
\]
with $\phi^6 G^\mu{}_\nu = O((\rho-\rho_0)^{1+\kappa})$ by assumption (iii), and by
assumptions (i) and (ii) furthermore
\bea
   \nabla_t \psi^t &=& \partial_t \psi^t + \psi^\alpha \Ga^t_{t\alpha} = \partial_t \psi^t 
   + \gamma' \psi^\rho = O((\rho-\rho_0)^{-1}), \\
   \nabla_\rho \psi^\rho &=& \partial_\rho \psi^\rho + \frac{2\phi'}{\phi} \psi^\rho 
   = O((\rho-\rho_0)^{-1}), \\
   \nabla_\vartheta \psi^\vartheta &=& \partial_\vartheta \psi^\vartheta + \left(\frac{1}{\rho} 
   + \frac{2\phi'}{\phi}\right) \psi^\rho = O((\rho-\rho_0)^{-1}), \\
   \nabla_\varphi \psi^\varphi &=& \partial_\varphi \psi^\varphi + \left( \frac{1}{\rho} 
   + \frac{2\phi'}{\phi} \right) \psi^\rho + \cot \vartheta \, \psi^\vartheta = O((\rho-\rho_0)^{-1}).
\eea
Hence $c e^\gamma \phi^6 G^\mu{}_\nu \nabla_\mu \psi^\nu = O((\rho-\rho_0)^{-1+
\kappa})$ and \eqref{exi} exists. 

Next we show that the integral is the same as the limit of the boundary integral on 
$\cT_\varepsilon \subseteq \partial\cM_\varepsilon$ as defined in \eqref{bint}, where
by $\cM_\varepsilon$ we denote the interior that is diffeomorphic to 
$\RR \times (\RR^3 \setminus \overline{B_{\rho_0+\varepsilon} (0)})$. Since the
integral exists,
\[
  \int_\cM G^\mu{}_\nu \nabla_\mu \psi^\nu \, d\vol_\g = \lim_{\varepsilon \to 0} 
  \int_{\cM_\varepsilon} G^\mu{}_\nu \nabla_\mu \psi^\nu \, d\vol_\g.
\]
Because $\g$ is smooth on $\cM_\varepsilon$, the classical second Bianchi identity
immediately implies, as in \eqref{motivation}, that for the vector field 
$X = X^\mu \partial_\mu$, defined by $X^\mu = G^\mu{}_\nu \psi^\nu$, we obtain
by Stokes' Theorem
 \[
  \int_{\cM_\varepsilon} G^\mu{}_\nu \nabla_\mu \psi^\nu \, d\vol_\g = \int_{\cM_\varepsilon} 
  (\ddiv X) \, d\vol_\g = \int_{\partial\cM_\varepsilon} i_{X}(d\vol_\g) 
  = \int_{\cT_\varepsilon} i_{X}(d\vol_\g).
 \]
Here, $i_X (d\vol_\g)$ denotes the interior product of the volume form with $X$.
Recall that only the component of $X$ normal to 
${\cT}_\varepsilon = \RR \times \partial B_{\rho_0+\varepsilon}(0)$ contributes. In
coordinates $(t,\rho,\vartheta,\varphi)$ the outward-pointing\footnote{Recall that 
for a unit normal vector $N$
to $\partial\Omega$ we have $i_X \omega|_{\partial\Omega} = i_{X^\perp}\omega = 
X^0 i_N\omega$ with $X^\perp =\pm g(X,N)N$. Thus, if $N$ is spacelike (and $\partial\Omega$
Lorentzian) it must be chosen outward-pointing and if $N$ is timelike (and $\partial\Omega$
Riemannnian) inward-pointing.} unit normal to 
$\cT_\varepsilon$ is $N = - (g_{\rho\rho})^{-\frac{1}{2}} \partial_\rho$. Since 
$\cT_\varepsilon$ is Lorentzian,
\[
  i_{X} (d\vol_\g)|_{\cT_\varepsilon} = g(X,N)  i_{N} (d\vol_\g) =  g_{\rho\rho} 
  (g_{\rho\rho})^{-\frac{1}{2}} X^\rho \, i_{(g_{\rho\rho})^{-\frac{1}{2}} \partial_\rho} 
  (d\vol_\g) = X^\rho i_{\partial_\rho}(d\vol_\g).
\]
Since $X^\rho = G^\rho{}_\rho \psi^\rho$ does not contain off-diagonal terms, this
implies
\[
    i_{X} (d\vol_\g)|_{\cT_\varepsilon} = G^\rho{}_\rho \psi^\rho i_{\partial_\rho}
    (d\vol_\g),
\]
with
 \[
  i_{\partial_\rho}(d\vol_\g)|_{\cT_\varepsilon} = - c e^{\gamma(\rho_0+\varepsilon)}
  \phi(\rho_0+\varepsilon)^6 (\rho_0+\varepsilon)^2 \sin \vartheta \, dt \wedge d\vartheta
  \wedge d\varphi.
 \]
Hence the integrand of the boundary integral reduces to
 \[
  i_X (d\vol_\g) |_{{\cT}_\varepsilon} = - G^\rho{}_\rho(\rho_0+\varepsilon)
  \psi^{\rho}(t,\rho,\vartheta,\varphi)c e^{\gamma(\rho_0+\varepsilon)} \phi(\rho_0+
  \varepsilon)^6 (\rho_0+\varepsilon)^2 \sin \vartheta \, dt \wedge d\vartheta \wedge
  d\varphi,
 \]
so that
 \bea
  \int_{\cM_\varepsilon} G^\mu{}_\nu \nabla_\mu \psi^\nu \, d\vol_\g 
  = - c G^\rho{}_\rho(\rho_0+\varepsilon) e^{\gamma(\rho_0+\varepsilon)} 
  \phi(\rho_0+\varepsilon)^6 \int_{\cT_\varepsilon} \psi^{\rho} \, dS,
 \eea
where $dS = (\rho_0+\varepsilon)^2 \sin \vartheta \, dt \wedge d\vartheta\wedge d\varphi$.
Since $\psi^\rho$ is smooth (and compactly supported)
 \[
  \lim_{\varepsilon \to 0} \int_{{\cT}_\varepsilon} \psi^\rho \, dS =
  \lim_{\varepsilon \to 0} 4\pi (\rho_0+\varepsilon)^2 \fint_{\partial B_{\rho_0+
  \varepsilon}(0)} \int_{-\infty}^\infty \psi^\rho \, dt \, d\tilde S,
 \]
where $d\tilde S = dS_{\partial B_{\rho_0+\varepsilon}(0)}$ denotes the surface
element of $\partial B_{\rho_0+\varepsilon}(0)$ in flat $\RR^3$. Since $\psi^\rho$
and therefore $\Psi^\rho := \int_{-\infty}^{\infty} \psi^\rho \, dt$ is smooth, we
observe that $\lim_{\varepsilon \to 0} \fint \Psi^\rho \, d\tilde S =
\Psi^\rho(\rho_0)$. We thus obtain that
 \bea
  \int_\cM G^\mu{}_\nu \nabla_\mu \psi^\nu \, d\vol_\g & = & \lim_{\varepsilon \to 0}
  \int_{\cM_\varepsilon} G^\mu{}_\nu \nabla_\mu \psi^\nu \, d\vol_\g \\
  & = & - 4\pi \rho_0^2 c \Psi^\rho(\rho_0) \lim_{\varepsilon \to 0} 
  G^\rho{}_\rho(\rho_0+\varepsilon) e^{\gamma(\rho_0+\varepsilon)} \phi(\rho_0
  +\varepsilon)^6.
 \eea
Due to the assumptions (i)--(iii) it follows that
 \[
  G^\rho{}_\rho(\rho_0+\varepsilon) e^{\gamma(\rho_0+\varepsilon)} \phi(\rho_0
  +\varepsilon)^6 \sim |\rho-\rho_0|^{-5+\kappa-1+6} \to 0 \qquad \text{as }
  {\varepsilon \to 0}.
  \qedhere
 \]
\end{proof}

Even though condition (iii) of Theorem~\ref{thm:bianchi} is close to being optimal for the 
conclusion of that Theorem to hold, the condition may seem somewhat unnatural in view of 
the fact that the components of $G^\mu_\nu$ generally involve both metric coefficients 
$\phi(\rho)$ and $\gamma(\rho)$ together with their first and second derivatives.
We now show that it is possible to eliminate condition (iii) entirely
by strengthening conditions (i) and (ii) to include suitable assumptions on the second derivatives of 
the independent metric coefficients $\phi(\rho)$ and $\gamma(\rho)$.

\begin{corollary}\label{thm:bianchi2}
With the same setup as in Theorem~\ref{thm:bianchi}, assume that the metric coefficients $\phi$ and $\gamma$ satisfy the following stronger assumptions as $\rho_0 \downarrow \rho$:
\begin{itemize}
\item[(i')] Assumptions on $\phi$:
\begin{itemize}
\item[(a)] $\phi(\rho) = O(\rho-\rho_0)$
\item[(b)] $\frac{\phi'}{\phi}(\rho) = (\rho-\rho_0)^{-1} - \frac{\rho+\rho_0}{2\rho\rho_0} + O(\rho-\rho_0)$
\item[(c)] $\frac{\phi''}{\phi}(\rho) = \frac{-2}{\rho} (\rho-\rho_0)^{-1} + O(1)$
\end{itemize}
\item[(ii')] Assumptions on $\gamma$:
\begin{itemize}
\item[(a)] $e^{\gamma(\rho)} = O((\rho-\rho_0)^{-1})$, 
\item[(b)] $\gamma'(\rho) = - (\rho-\rho_0)^{-1} + \frac{1}{2\rho_0} + O(\rho - \rho_0)$,
\item[(c)] $\gamma''(\rho) =  (\rho-\rho_0)^{-2} +O(1)$.
\end{itemize}
\end{itemize}
Then the same conclusion as Theorem~\ref{thm:bianchi} holds, namely, the twice contracted second Bianchi identity is satisfied weakly in the sense of Definition \ref{def:bianchi}.
\end{corollary}

\begin{proof}
A computation shows that for the metric $\mathbf{g}$ of Theorem \ref{thm:bianchi},  the Einstein tensor $\mathbf{G}$ is diagonal, and we have
\beq
 G^\rho{}_\rho= \frac{2}{\rho \phi^4} \left( \gamma' + (1+ \rho \gamma' ) 
 \frac{2\phi'}{\phi} + 2 \rho \left( \frac{\phi'}{\phi}\right)^2\right), \label{GG} 
\eeq
Hence, by (i'b) and (ii'b), $G^\rho{}_\rho = O((\rho-\rho_0)^{-4})$ 
as $\rho \downarrow \rho_0$, while by (i'b) and (i'c) we have
\beq
   G^t{}_t = {\frac{4}{\rho\phi^4} \left( \frac{2 \phi'}{\phi} + \frac{\rho 
   \phi''}{\phi} \right) = O((\rho-\rho_0)^{-4})}. 
\eeq
Finally, the above two, together with (i'b), (ii'b) and (ii'c) imply that
\beq
   G^\vartheta{}_\vartheta= G^\varphi{}_\varphi = \frac{1}{2} (G^t{}_t - G^\rho{}_\rho)
   + \frac{1}{\rho\phi^4} \left( \gamma' \bigg(2+\rho \big(\gamma'+
   \frac{2\phi'}{\phi}\big)\bigg) + \rho \gamma'' \right) = O((\rho-\rho_0)^{-4}),
\eeq
so that hypothesis (iii) of Theorem \ref{thm:bianchi} holds with $\kappa = 1$.
\end{proof}

Sufficient conditions in the usual spherically symmetric coordinates can also be derived:

\begin{corollary}\label{cor1}
Consider $(\cM,\g)$ as in Theorem~\ref{thm:bianchi}. Suppose $\g$ in area-radius
coordinates $(t,r,\vartheta,\varphi)$ is of the form
 \[
  ds^2_\g = - e^{2\alpha(r)} c^2 dt^2 + e^{-2\alpha(r)} dr^2 + r^2 (d\vartheta^2 +
  \sin^2 \vartheta d\varphi^2),
 \]
with
\beq\label{xim}
  e^{2\alpha(r)} = 1 - \frac{2Gm(r)}{c^2r}
\eeq
and such that the cumulative mass function has an absolutely converging power series
expansion
\bna\label{mdecay2}
  m(r)\  {=}\  m_0 + m_1 r + m_2 r^2 + O(r^3),
  \qquad \text{as } r \downarrow 0,
\ena
with $m_0<0$. Then the second Bianchi identity holds weakly in the sense of
Definition~\ref{def:bianchi}.
\end{corollary}

\begin{proof}
By the discussion in Section~\ref{sec:ZAScoord} the transformation to spatially 
conformally flat coordinates $(t,\rho,\vartheta,\varphi)$ is possible. It remains to be
checked that the assumptions (i') and (ii') in Corollary~\ref{thm:bianchi2} are also
satisfied.
 
By assumption~\eqref{mdecay2} we have
\[
  r^2 - \frac{2G}{c^2} r m(r) = O_1(r), \qquad r\downarrow 0,
\]
hence the new coordinate $\rho(r)$ given by \eqref{eq:rhomass} is well-defined for 
small $r>0$ and satisfies
\bea
  \rho(r) &=& \rho_0 \left( 1 + \sqrt{-\frac{2c^2}{Gm_0}} r^{\frac{1}{2}} - \frac{c^2}{Gm_0}
  r - \frac{3c^2 + G m_1}{6 (Gm_0 \sqrt{-\frac{2Gm_0}{c^2}})} r^{\frac{3}{2}} +
  \frac{c^2 m_1}{3Gm_0^2} r^2 + O(r^{\frac{5}{2}}) \right) \\
  &=& \rho_0 + O_2(\sqrt{r}).
\eea
Hence the inverse function $r=r(\rho)$ has an expansion of the form
\bea
  r(\rho) &=& \frac{Gm_0}{2c^2\rho_0^2} ( \rho-\rho_0)^2 \left( -1 + \frac{1}{\rho_0}
  (\rho-\rho_0) + \left(-\frac{1}{4\rho_0^2} + \frac{2Gm_1 - 9 c^2}{12c^2\rho_0^2} 
  \right) (\rho-\rho_0)^2 + O((\rho-\rho_0)^3) \right) \\
  &=&O_2((\rho-\rho_0)^2),
\eea
and thus
\[
  \phi(\rho)^2 = \frac{r(\rho)}{\rho} = O_2(|\rho-\rho_0|^2), \qquad \rho \downarrow
  \rho_0,
\]
with
\beq\label{exp:phi}
  \frac{\phi'}{\phi} = -\frac{1}{2\rho} + \frac{r'(\rho)}{2r(\rho)}
  = (\rho-\rho_0)^{-1} - {\frac{1}{2} \left( \frac{1}{\rho_0} 
  + \frac{1}{\rho} \right)}  + O(\rho-\rho_0),
\eeq
which establishes (i'a)  and (i'b) of Corollary~\ref{thm:bianchi2}.
Moreover, (ii'a) holds since \eqref{mdecay2} implies
\bea
  e^{2\gamma(\rho)} &=& e^{2\alpha(r(\rho))} = 1 - \frac{2 Gm(r(\rho))}{r(\rho)} \\
  &=& 4c^2\rho_0^2(\rho-\rho_0)^{-2} + 4c^2 \rho_0 (\rho-\rho_0)^{-1} + 1- 2Gm_1 + 
  O(\rho-\rho_0) = O_2((\rho-\rho_0)^{-2}),
\eea
and, in particular, using (\ref{exp:phi}),
\bea 
  \gamma'(\rho) &=& \frac{1}{2} (e^{2\gamma(\rho)})'e^{-2\gamma(\rho)} 
= G e^{-2\gamma(\rho)} \frac{r(\rho)'}{r(\rho)}\left( - m'(r(\rho)) +
\frac{m(r(\rho))}{r(\rho)} \right) \\
  &=& {- \frac{1}{2} \left( 1 - \frac{1}{\rho_0} (\rho-\rho_0)+\ldots \right) 
  \left( 2 \frac{\phi'}{\phi} + \frac{1}{\rho} \right) 
\left( 1 + \frac{1}{\rho_0} (\rho-\rho_0) + \ldots \right)} \\
  &=& - \left( \frac{\phi'}{\phi} + \frac{1}{2\rho} \right) \left( 1 + 
  O((\rho-\rho_0)^2) \right)
  = - (\rho-\rho_0)^{-1} + {(2\rho_0)^{-1}} + O(\rho-\rho_0)
\eea
which establishes (ii'b), and upon differentiation, (ii'c) of Corollary~\ref{thm:bianchi2}. 
Finally, using \eqref{exp:phi} we have
 \[
 \frac{\phi''}{\phi} = \left(\frac{\phi'}{\phi}\right)' + \left( \frac{\phi'}{\phi}
 \right)^2  = -(\frac{1}{\rho}+\frac{1}{\rho_0})(\rho-\rho_0)^{-1} + O(1) = \frac{-2}{\rho}(\rho-\rho_0)^{-1} + O(1)
 \]
which establishes (i'c) of Corollary~\ref{thm:bianchi2}, and thus the second
twice-contracted Bianchi identity holds weakly.
\end{proof}

\begin{remark}
The condition that $\beta = -\alpha$ for the coordinate coefficients in
Corollary~\ref{cor1} is not necessary for the coordinate transformation. If 
$\beta \neq - \alpha$ then the cumulative mass function is given as usual by 
$m(r) := \frac{c^2}{2G} r \left( 1 - e^{-2\beta(r)} \right)$ and $m$ needs to define
the same conditions in order for the coordinate transformation to spatially
conformally flat coordinates to go through. However, whether the conditions (ii) and
(iii) in Theorem~\ref{thm:bianchi} hold then not only depends on the behavior of $m$
but also on the asymptotic behavior on $\alpha$ as $r \downarrow 0$ since 
$\gamma(\rho) := \alpha(r(\rho))$.
\end{remark}


\section{Spherically symmetric electrostatic spacetimes}
\label{sec:Hamiltonians}
 
The Einstein--Maxwell equations, i.e., \eqref{EM1} together with
\beq\label{EM2}
 d\F = 0, \qquad d\M = 0,
\eeq
valid pointwise away from any singularities of spacetime, are part of any field
theory of electromagnetism in general relativity. The electromagnetic field is
represented by the Faraday tensor $\F$ and the Maxwell tensor $\M$. To arrive at a
field theory of electromagnetism the tensors $\F$ and $\M$ need to be related by a
``law of the electromagnetic vacuum'' [``ether law'' for short], which also fixes the
corresponding electromagnetic energy-momentum-stress tensor $\T$.

Using Eiesland's Theorem~\cites{E1,E2}, which is a generalized and, in fact,
preceding version of the well-known Birkhoff Theorem, one of us 
\cite{TZ1}*{Theorem 6.2} showed that the metric $\g$ of any electrostatic,
spherically symmetric spacetime with an electromagnetic vacuum law determined by a
field Lagrangian which depends only on the two invariants of $\F$, viz.\ 
$\frac14 F_{\mu\nu}F^{\mu\nu}$ and $\frac14 F_{\mu\nu}\star F^{\mu\nu}$, must, in
spherical coordinates $( t,r,\vartheta,\varphi )$,  be given by
\beq\label{sphsymm}
 ds^2_\g = - e^{2\al(r)} c^2 dt^2 + e^{-2\al(r)} dr^2 + r^2 (d\vartheta^2 + \sin^2
 \vartheta d\varphi^2).
\eeq
The function $\al(r)$ is smooth for $r>0$ and depends on the ether law.
 
The simplest law of the electromagnetic vacuum is Maxwell's ``law of the pure
ether,''
\beq
\label{MV}
\M = - \star\F,
\eeq
where $\star$ is the Hodge star operator\footnote{The Hodge $\star$ dual of a 
$k$-form is a $(n - k)$-form, where $n$ is the number of dimensions. In our setting,
$\star$ takes a $2$-form to the dual 2-form.} with respect to $\g$. In this case the
set of coupled equations (\ref{EM1},\ref{EM2},\ref{MV}) is called the 
Einstein--Maxwell--Maxwell (EMM) system. The unique static, spherically symmetric,
asymptotically flat EMM spacetime is the Reissner--Weyl--Nordstr\"om (RWN) solution
with metric component
\beq\label{RN}
  e^{2\al(r)} = 1 - \frac{2 G}{c^4r} \left(Mc^2- \frac{Q^2}{2r}\right).
\eeq
One can show that $M$ is the ADM mass of the spacetime, while $Q$ is its charge. The
ratio\footnote{In a Newtonian theory, the fraction $\frac{GM_1M_2}{|Q_1Q_2|}$ is the
ratio of the coupling constants of the gravitational and electrical pair interaction
energies of any two interacting point charges.
Inserting empirical values, for two interacting electrons one finds the tiny value 
$\frac{G\mEL^2}{\eEL^2}\approx 2.4\times 10^{-43}$. If one has only one point charge,
it is tempting to think of  $\frac{GM^2}{Q^2}$ as the ratio of the gravitational and
electrical self-energy coupling constants, but in a Newtonian theory there is no such
thing, and in special-relativistic electromagnetic Maxwell--Lorentz field theory of 
point charges, the  self energies are infinite. This does not improve in general
relativity, so the meaning of $\frac{GM^2}{Q^2}$ lies elsewhere.} $\frac{GM^2}{Q^2}$
determines the causal structure of the RWN spacetime ($\frac{GM^2}{Q^2}>1$:
subextremal with black hole region; $\frac{GM^2}{Q^2}=1$: extremal with black hole
region; $\frac{GM^2}{Q^2}<1$: superextremal with a naked singularity). While one
would expect the superextremal RWN spacetime to represent the simplest realistic
charged-particle spacetime, some troubling divergence behaviors occur. More
precisely, the cumulative mass function
\[
m(r) {\ =\ } \frac{c^2}{2G} r\left(1 -  e^{2\al(r)}\right),
\]
(cf. \eqref{cummass}) which for RWN reads
\beq\label{def:massfunRWN}
 m_{\mbox{\tiny{RWN}}}(r)= M- \frac{Q^2}{2c^2r},
\eeq
diverges when $r\downarrow 0$, together with the Kretschmann scalar for RWN (see the
discussion in \cite{TZ1}*{Section 1} and Section~\ref{secEMSTmild} below).

One way to overcome the divergence of the cumulative mass function is to consider a
nonlinear electromagnetic theory, for instance the Born--Infeld theory~\cites{B,BI}
(for more historical context in a modern language, see \cites{K1,K2}). This is done
by choosing a Lorentz- and (Weyl) gauge-invariant Lagrangian density $\bL$ for the 
electromagnetic action
 \[
  \mathcal{S}[\A] = \int_\cM \bL (d\A)
 \]
such that in the weak field limit it reduces to the Lagrangian of the 
Maxwell--Maxwell system (\ref{EM2},\ref{MV}), and has {\em finite} total energy for a
point charge.
 
In the spherically symmetric electrostatic case the above formulation boils down to
finding a suitable {\em reduced Hamiltonian}\footnote{We now largely follow the
notation of \cite{TZ1}, with only smaller deviations.} $\zeta$ that yields a solution
to (\ref{EM1},\ref{EM2}) having an ADM mass $M=\lim_{r\to\infty} m(r) $, and a charge
$Q$. Such a reduced Hamiltonian is subject to the following admissibility criteria.

\begin{definition}\label{zeta}
A function $\zeta\colon \RR^+_0\to\RR$ is called an \emph{admissible reduced
Hamiltonian} if it satisfies
  \begin{itemize}
   \item[(R1)] $\lim_{\mu\to 0} \frac{\zeta(\mu)}{\mu} = 1$.
   \item[(R2)] $\zeta'(\mu)>0$ and $\zeta(\mu) - \mu \zeta'(\mu) \geq 0$ for all $\mu>0$.
   \item[(R3)] $\zeta'(\mu) + 2 \mu \zeta''(\mu) \geq 0$ for all $\mu>0$.
   \item[(R4)] $I_\zeta = 2^{-\frac{11}{4}} \int_0^\infty y^{-\frac{7}{4}} \zeta(y) dy 
   < \infty$.
   \item[(R5)] There exist constants $J_\zeta, K_\zeta, L_\zeta >0$ such that
   \[
    J_\zeta \sqrt{\mu} - K_\zeta \leq \zeta(\mu) \leq J_\zeta \sqrt{\mu}, \qquad \text{and}
    \qquad J_\zeta \sqrt{\mu} - 2 L_\zeta \leq 2 \zeta'(\mu)\mu \leq J_\zeta \sqrt{\mu}.
   \]
  \end{itemize}
\end{definition}

These admissibility criteria are derived and motivated in detail in
Section~\ref{secEMSTmild} below. Note that in terms of the reduced Hamiltonian 
$\zeta$ the cumulative mass function is
\beq \label{mOFr}
m(r) = M - \frac{1}{c^2} \int_r^\infty \zeta \left( \tfrac{Q^2}{2s^4} \right) s^2 ds,
\eeq
and the electric potential is $\A = \varphi(r)c \,dt$ with
\beq
 \varphi(r) = Q \int_r^\infty \zeta' \left( \tfrac{Q^2}{2s^4} \right) \tfrac{1}{s^2}
 ds.
\eeq

One can easily check that the Born--Infeld reduced Hamiltonian
 \[
  \zeta_\mathrm{BI}(\mu) = \sqrt{1+2\mu} - 1
 \]
is an admissible Hamiltonian, and it leads to the Hoffmann solution~\cite{H}. Here, 
$\mu$ is a dimensionless $|D|^2$, where $D$ is Maxwell's displacement field (w.r.t.\
a Lorentz frame).

In Section~\ref{prototype} we give another prototype example with a Zero-Area
singularity (with asymptotic behavior as discussed in Section~\ref{sec:ZAScoord})
which is obtained from a particular admissible reduced Hamiltonian and hence
satisfies the second twice-contracted Bianchi identity weakly.

In Section~\ref{secEMSTmild} we study reduced Hamiltonians $\zeta$ more
systematically. In particular, we discuss the admissibility conditions (R1)--(R5)
from Definition~\ref{zeta} and its consequences. One such important consequence is
that given an electrostatic spacetime solution with charge $Q\in\RR\backslash\{0\}$,
by rescaling the reduced Hamiltonian $\zeta$ we can find another electrostatic
spacetime solution that corresponds to the new, rescaled vacuum law, has the same 
charge $Q$, and has any desired {\em bare rest mass} $m_0\leq 0$ and ADM mass
$M>m_0$. We prove this in Proposition~\ref{prop:beta} in Section~\ref{secEMSTmild}.
Subsequently we revisit the second Bianchi identity and show that an admissible
reduced Hamiltonian implies that the second Bianchi identity is satisfied weakly 
(cf.\ Theorem~\ref{thm:es} in the Introduction) based on our general results in
Section~\ref{secBIANCHI}.

\subsection{A prototype electrovac spacetime with finite negative bare mass}
\label{prototype}

Let $m_0<0$, $M>0$ and $Q\in \RR\setminus\{0\}$ be given, and assume that
\beq
\xi^2 := \frac{G(M-m_0)^2}{Q^2} < 1.
\eeq
Let $(\cM_0,\g_0)$ denote the static, spherically symmetric, asymptotically flat,
electromagnetic spacetime that corresponds to the following reduced Hamiltonian
\beq
\zeta(\mu) = \min\{\mu, \sqrt{\mu_0\mu}\},\qquad \mu_0 := \frac{(M-m_0)^4 c^8}{2Q^6}.
\eeq
It is not hard to see that the mass function of this spacetime will be
\beq\label{mdecay}
m(r) = m_0 + \frac{M-m_0}{2}\left\{\begin{array}{ll} \frac{r}{r_0} & r<r_0\\
2 - \frac{r_0}{r} & r > r_0 \end{array}\right.,
\qquad r_0 := \frac{Q^2}{(M-m_0)c^2}.
\eeq
Thus $m(0) = m_0<0$, $m(\infty) = M>0$. One can also verify that the total charge of 
the spacetime is $Q$, and that the singularity at $r=0$ is not shielded by a horizon.

We observe that on this 3-parameter family of electrovac spacetimes, indexed by 
$m_0, M, Q$, the asymptotic behavior \eqref{mdecay} together with
Corollary~\ref{cor1} immediately implies that the twice-contracted second Bianchi
identity is satisfied weakly.

\subsection{Nonlinear electrostatic spacetimes with naked singularities} 
\label{secEMSTmild}

We defined admissible reduced Hamiltonians in Definition~\ref{zeta} via the
properties (R1)--(R5).
The reason for these requirements are discussed extensively in \cite{TZ1}*{Sec.\ 4}.
Let us briefly mention that (R1) implies that the weak field limit is the same as for
classical linear electromagnetics, (R2) implies the dominant energy condition is
satisfied, and (R3) guarantees that $\zeta$ is the Legendre--Fenchel transform of a 
Lagrangian density. Note that (R3) together with (R1) implies the strong energy
condition.

If $m$ denotes the cumulative mass function~\eqref{mOFr}, then 
$\lim_{r\to\infty}m(r)= M$, and the metric components in coordinates 
$(t,r,\vartheta,\varphi)$ in \eqref{sphsymm} are given by 
$e^{2\al(r)}= 1- \frac{2G m(r)}{c^2r}$ as in \eqref{xim}.

If $\lim_{r\downarrow0}m(r) =: m_0$ is finite and nonzero, the Kretschmann scalar
\beq\label{def:Kretch}
 \cK := R_{\mu\nu\lambda\eta} R^{\mu\nu\lambda\eta} =
 \frac{4G^2}{c^4 r^6} \left( 12 m^2 + 4rm (-4m'+rm'') + r^2(8m'^2-rm'm''+r^2 m''^2) \right)
\eeq
blows up at least like $r^{-6}$ as $r \downarrow 0$, indicating the presence of a
true singularity as opposed to a mere coordinate singularity. In the case of the
negative-mass Schwarzschild metric the Kretschmann scalar $\cK = \frac{48 G^2 M^2}{c^4 r^6}$ 
is moreover proportional to the square of the mass $M$. For the 
Reissner--Weyl--Nordstr\"om metric \eqref{RN} with ADM mass $M$ and charge $Q$ (the
Maxwell law $\zeta(\mu)=\mu$ satisfies the criteria (R1)--(R3) but not more), the
Kretschmann scalar 
\beq\label{rn1}
  \cK_\mathrm{RWN}(r)
  = \frac{48G^2M^2}{c^4 r^6} \left( 1 + \frac{2Q^2}{c^2 Mr} + 7 
  \frac{Q^4}{48c^4 M^2r^2} \right)
\eeq
blows up like $r^{-8}$ as $r \downarrow 0$.

Thus the above conditions alone already imply that the center at $r=0$ must be
nonregular. In \cite{TZ1} the mildest possible naked singularity was studied and 
found to be a conical singularity with $m_0=0$ and blow-up rate $r^{-4}$ of the
Kretschmann scalar. More generally \emph{naked} singularities require $m_0 \leq 0$;
otherwise black holes will occur. In other words, there are two cases to consider:

\medskip
{\bf Case 1: $m_0 > 0$.}
We show that, so long as $M < \infty$, the spacetime will have a horizon. In this
vein, assume to the contrary that there is no horizon, so that the coordinate chart 
$(t,r,\vartheta,\varphi)\in \RR\times\RR_+\times\SS^2$ is global, with $r$ spacelike and
$t$ timelike throughout. Set 
\[
f(r) := c^2 r - 2Gm(r).
\]
Then, since $m_0>0$ by assumption, $f(0) = -2Gm_0 < 0$, while $M<\infty$ clearly
implies $f(\infty)>0$. Since $m(r)$ is a continuous function for $r>0$, it now
follows that there exists an $r_0>0$ such that $f(r_0) = 0$. But this implies that
$e^{2\al(r_0)} = \frac{1}{c^2}\frac{f(r_0)}{r_0}= 0$. In fact the metric coefficient
$g_{00}$ generally changes sign across $r=r_0$, which means that the area-radius
coordinate $r$ becomes timelike for $r<r_0$, in contradiction to the hypothesis that
$r$ is spacelike throughout. Therefore $r=r_0$ is a Killing horizon, and the
coordinate chart only covers the region $r>r_0$ of the spacetime.

\medskip
{\bf Case 2: $m_0 \leq 0$.}
In the superextremal RWN spacetime one has $\lim_{r\downarrow0}m(r) =-\infty$ and a
severe curvature singularity at $r=0$. This is due to a non-integrable electrostatic
field energy density about $r=0$. However, for admissible Hamiltonians with finite
electrostatic field energy, one can compute the total field energy to be
\[
\int_0^\infty \zeta \left( \tfrac{Q^2}{2s^4} \right) s^2 ds = |Q|^{\frac{3}{2}}
I_\zeta,
\qquad
  I_\zeta := 2^{-\frac{11}{4}} \int_0^\infty y^{-\frac{7}{4}} \zeta(y) \, dy 
  {\ < \infty},
\]
as demanded in (R4), and we therefore have 
$M - m_0 = \frac{|Q|^{\frac{3}{2}} I_\zeta}{c^2}$; i.e., the difference between the 
accumulated mass of the spacetime and the bare rest mass is entirely due to the
electrostatic field. For the ADM mass $M$ we then have, in general,
\beq\label{adm}
 M = m_0+\frac{|Q|^{\frac{3}{2}} I_\zeta}{c^2}.
\eeq
Thus we have $\lim_{r\downarrow0}m(r) =: m_0 \in (-\infty, 0]$, yielding a less 
singular behavior at $r=0$.

\bigskip
From now on we assume that $m_0 \leq 0$ is finite and consider the behavior of the
spacetime near the center. If we assume that there exists a positive constant
$J_\zeta$ such that
\begin{enumerate}
 \item[(R5${}^\prime$)] $\zeta(\mu) \leq J_\zeta \sqrt{\mu}$,
\end{enumerate}
then the integral term of the  cumulative mass function $m(r)$ in \eqref{mOFr} can be
estimated using
\[
 \int_0^r \zeta\big(\tfrac{Q^2}{2s^4}\big) s^2 ds \leq J_\zeta |Q| 
 \frac{r}{\sqrt{2}},
\]
which implies that in a neighborhood of the center, $m(r)$ is bounded from above by
\beq\label{mup}
 m(r) \leq m_0 + J_\zeta |Q| \frac{r}{\sqrt{2} c^2}. 
\eeq
Since by assumption $m_0\leq 0$, this shows that the metric coefficient $g_{00}$ is
bounded away from zero, 
 \[
e^{2\al(r)} = 1 - \frac{2G m(r)}{c^2r} \geq 1 - \frac{2Gm_0}{c^2r} - \sqrt{2} 
J_\zeta |Q| \frac{G}{c^4} > 1 - \sqrt{2} J_\zeta |Q| \frac{G}{c^4} > 0,
 \]
as long as
\beq\label{nakedq0}
  \frac{|Q|G J_\zeta}{c^4} < \frac{1}{\sqrt{2}}.
\eeq
(Note that the left-hand-side is dimensionless.) Thus, \eqref{nakedq0} implies the
absence of a horizon, which means that a \emph{naked singularity} occurs whenever the
charge is sufficiently small. Of course, \eqref{nakedq0} is only a sufficient
condition for absence of a horizon. 
 
\medskip
We now show that if an electrostatic spacetime solution exists for prescribed total
charge $Q$, then for the same charge $Q$ one can generate such a spacetime with any
bare rest mass $m_0\leq 0$ and ADM mass $M>m_0$. This is achieved by a rescaling of
the associated reduced Hamiltonian.
 
\begin{proposition}\label{prop:beta}
Let $\zeta$ be an admissible Hamiltonian that satisfies \emph{(R1)--(R3)}. We
additionally assume that $\zeta$ satisfies
\begin{enumerate}
 \item[(R4)] $I_\zeta = 2^{-\frac{11}{4}} \int_0^\infty y^{-\frac{7}{4}} \zeta(y) dy 
 < \infty$.
\end{enumerate}
Suppose there exists an electrostatic spacetime metric $\g$ with charge 
$Q\in \RR\backslash\{0\}$ satisfying the Einstein--Maxwell equations for the ether
law generated by $\zeta$. Let $m_0\leq 0$ and $M > m_0$ be given. Then for the 
dimensionless number
\[
 \lambda {\ :=\ } \frac{|Q|^{\frac{3}{2}} I_\zeta}{(M-m_0)c^2}
\]
the $\lambda$-scaled version of $\zeta$, defined by
\beq\label{zetab}
  \zeta_\lambda (\mu)= \lambda^{-4} \zeta (\lambda^4 \mu),
\eeq
is itself an admissible reduced Hamiltonian, and there exists a corresponding
electrostatic spacetime metric $\g_\lambda$ which has charge $Q$, ADM mass
$M=\lim_{r\to\infty}m(r)$, and {\em bare rest mass }
\beq\label{def:brm}
m_0 := \lim_{r\downarrow 0} m(r).
\eeq
\end{proposition}

\begin{proof}
Note that $\zeta_\lambda$ satisfies (R1)--(R4) because $\zeta$ does. Furthermore, 
\eqref{zetab} implies that $I_{\zeta_\lambda}$ as defined in (R4) transforms as
\beq
 I_{\zeta_\lambda} = 2^{-\frac{11}{4}} \int_0^\infty y^{-\frac{7}{4}} \lambda^{-4}
 \zeta(\lambda^4 y) dy = \lambda^{-1} I_\zeta.
\eeq
Therefore, by \eqref{adm},
 \[
  M = m_0 + \frac{|Q|^{\frac{3}{2}} I_{\zeta_\lambda}}{c^2}
 \]
as desired.
\end{proof} 
 
\begin{remark}
The borderline case $m_0 = 0$  was treated already in \cite{TZ1}. In this case there
is no bare mass at the center, and the geometric ADM mass $M$ is entirely due to the
electrostatic field energy $|Q|^{\frac{3}{2}} I_\zeta$, more precisely,
$
M = \frac{|Q|^{\frac{3}{2}} I_\zeta}{c^2}.
$
In \cite{TZ1} it was also shown that given any charge $Q$, also \emph{any} positive
ADM mass $M>0$ can be achieved in this case via an appropriate choice of a scaling
parameter: For $\zeta_\lambda(\mu) = \lambda^{-4} \zeta (\lambda^4\mu)$ we have
$I_{\zeta_\lambda} = \lambda^{-1} I_\zeta$, and the ADM mass becomes
$
M = \frac{1}{ \lambda}\frac{|Q|^{\frac{3}{2}} I_\zeta}{c^2},
$
with $Q$ still the charge of the spacetime. By a suitable choice of $\lambda$, any
value of $M>0$ can be generated. Clearly, this is a special case of our Proposition
\ref{prop:beta}. These solutions are asymptotically flat with a conical singularity
at the center if the ratio $\frac{GM^2}{Q^2}$ is sufficiently small; see 
\cite{TZ1}*{Sec.\ 5.1}.
\end{remark}
 
The sufficient condition \eqref{nakedq0} for obtaining a naked singularity can be
reformulated in the $\lambda$-scaled setting. Note that
$
 J_{\zeta_\lambda} = \lambda^{-2} J_\zeta.
$
Hence \eqref{nakedq0} translates to
\beq\label{eps}
  \frac{G (M-m_0)^2 }{|Q|^2} < \frac{I_\zeta^2}{\sqrt{2} J_\zeta}.
\eeq
From now on we always assume that \eqref{eps} is satisfied.

Together with $m_0 \leq 0$, condition \eqref{eps} guarantees that there is no horizon
and $r$ is a spacelike coordinate on $(0,\infty)$. In fact, we have
 
\begin{proposition}
Suppose $\zeta$ is an admissible reduced Hamiltonian satisfying 
\emph{(R1)--(R5$^\prime$)} and $\lambda$ etc.\ is given as in Proposition~\ref{prop:beta}. 
If the dimensionless ratio
\[
\epsilon^2 := \frac{G(M-m_0)^2}{|Q|^2}
\]
is sufficiently small (as in \eqref{eps}), then $\g_\lambda$ features a naked
singularity at the center. \hfill \qed
\end{proposition}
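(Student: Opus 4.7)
The plan is to reduce Proposition~\ref{prop:beta2} directly to the sufficient no-horizon criterion \eqref{nakedq0}, which was already derived earlier in this section for any admissible reduced Hamiltonian satisfying (R1)--(R5'), by applying that criterion to the rescaled Hamiltonian $\zeta_\lambda$ produced by Proposition~\ref{prop:beta}, and then rewriting the resulting inequality in terms of the physical parameters $M$, $m_0$, $Q$. By Proposition~\ref{prop:beta} the metric $\g_\lambda$ has charge $Q$, ADM mass $M$, and bare rest mass $m_0\leq 0$, and the rescaled $\zeta_\lambda$ inherits (R1)--(R5') from $\zeta$, so all hypotheses needed to invoke \eqref{nakedq0} for $\g_\lambda$ are in place.

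The first and essentially only computational step is to track how the constant in (R5') transforms. From $\zeta_\lambda(\mu) = \lambda^{-4}\zeta(\lambda^4\mu)$ combined with $\zeta(y) \leq J_\zeta\sqrt{y}$ one computes
\[
\zeta_\lambda(\mu) \;\leq\; \lambda^{-4} J_\zeta \sqrt{\lambda^4\mu} \;=\; \lambda^{-2} J_\zeta \sqrt{\mu},
\]
so one may take $J_{\zeta_\lambda} = \lambda^{-2} J_\zeta$, as recorded just before \eqref{eps}. Plugging this, together with $\lambda = |Q|^{3/2} I_\zeta/((M-m_0)c^2)$ from Proposition~\ref{prop:beta}, into the no-horizon condition $|Q|GJ_{\zeta_\lambda}/c^4 < 1/\sqrt{2}$ and simplifying rewrites the left-hand side as $G(M-m_0)^2 J_\zeta/(|Q|^2 I_\zeta^2)$, so the condition becomes precisely \eqref{eps}. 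Thus under the hypothesis of Proposition~\ref{prop:beta2} the bound \eqref{nakedq0} is satisfied for $\zeta_\lambda$.

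To finish, I would invoke the Case~2 argument that produced \eqref{nakedq0}: it gives $e^{2\alpha_\lambda(r)}$ bounded away from zero on $(0,\infty)$, so $r$ remains spacelike, no Killing horizon forms, and the static chart covers the spacetime down to $r=0$. Since $m_0\leq 0$ and $Q\neq 0$, the behavior of the cumulative mass function and the Kretschmann scalar discussed earlier in Section~\ref{secEMSTmild} (blow-up at least like $r^{-6}$ for $m_0\neq 0$, like $r^{-4}$ in the conical borderline) show that $r=0$ is genuinely a curvature singularity of $\g_\lambda$, not a removable point. Combining no-horizon with curvature-singular yields a naked singularity. The only possible obstacle is bookkeeping: checking that $J_{\zeta_\lambda}=\lambda^{-2}J_\zeta$ is handled consistently and that the algebraic substitution of $\lambda$ into \eqref{nakedq0} indeed reproduces \eqref{eps}; no new technical machinery beyond Proposition~\ref{prop:beta} and the already-established Case~2 analysis is required.
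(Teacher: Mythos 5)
Your proposal is correct and follows exactly the route the paper intends: the paper states Proposition~\ref{prop:beta2} with no separate proof because it is meant to follow immediately from the Case~2 derivation of \eqref{nakedq0}, the observation $J_{\zeta_\lambda}=\lambda^{-2}J_\zeta$, and the substitution of $\lambda$ that turns \eqref{nakedq0} into \eqref{eps} — all of which you reproduce accurately. Your added check that $r=0$ is a genuine curvature singularity (so that ``naked singularity'' is not vacuous) is a sensible bit of completeness the paper leaves implicit.
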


Note that $\epsilon^2$ is a {\em dimensionless} quantity in Gaussian units (see also
\cite{M}*{p.\ 5} for a discussion).

\begin{example}[Born--Infeld model]
In the setting of the Born--Infeld theory, where $\zeta(\mu) = \sqrt{1+2\mu}-1$, we
can choose $J_\zeta = \sqrt{2}$ in (R5${}^\prime$). Moreover,
\[
 I_\zeta = - \frac{\Gamma(-\frac{3}{4})\Gamma(\frac{5}{4})}{2\sqrt{\pi}} \approx
 1.2360498,
\]
and thus
\[
 \frac{I_\zeta^2}{\sqrt{2} J_\zeta} \approx 0.76390954.
\]
If we consider the mass and charge of an electron, i.e.,
\[
 M_\text{e} = 9.10938356  \times 10^{-31} ~\text{[kg]}, \qquad Q_\text{e} 
 = 1.6021765 \times 10^{-19} ~\text{[C]} \cdot \sqrt{k_\text{e}},
\]
where $k_\text{e} = 8.98755179 \times 10^9 ~\text{[kg} \,\text{m}^3
\text{s}^{-2}\text{C}^{-2}]$ is the Coulomb constant, then for $m_0 = 0$, and 
gravitational constant $G = 6.67408 \times 10^{-11} [\text{m}^3 \text{kg}^{-1}
\text{s}^{-2}]$, we have
\[
 \epsilon^2 = \frac{G M^2}{|Q|^2} \approx 2.40053 \times 10^{-43},
\]
so we are far in the naked singularity regime due to \eqref{eps} being satisfied.
Since gravitational effects ($\propto G$) are small, for $m_0<0$ we are guaranteed a
naked singularity so long as $m_0 > - \varsigma M$, where $\varsigma$
is a large positive
constant.
\end{example}
 
Next, let us consider the behavior of the spacetime near the center of the symmetry,
for $m_0<0$. The singularity at $r=0$ will no longer be conical, but exhibit a
stronger blow-up behavior. If, in addition to (R5${}^\prime$) we assume that there is
also $J_\zeta, K_\zeta >0$ such that
\begin{enumerate}
 \item[(R5${}^\pp$)] $J_\zeta \sqrt{\mu} - K_\zeta \leq \zeta(\mu)$, 
\end{enumerate}
then we also obtain an estimate of $m(r)$ from below. More precisely,
\beq\label{mlow}
 m(r)=m_0+\frac{1}{c^2}\int_0^r\zeta \Big(\tfrac{Q^2}{2s^4} \Big) s^2 ds \geq m_0 +
 J_\zeta |Q| \frac{r}{\sqrt{2}c^2} - K_\zeta \frac{r^3}{3c^2},
\eeq
which together with {the upper bound} \eqref{mup} implies that
\[
 m(r) = m_0 + J_\zeta |Q| \frac{r}{\sqrt{2}c^2} + O(r^3)
\]
as $r \downarrow 0$.

If we in addition assume that there is a positive constant $L_\zeta > 0$ such that
\begin{enumerate}
 \item[(R5${}^\ppp$)] $J_\zeta \sqrt{\mu} - 2 L_\zeta \leq 2 \zeta'(\mu)\mu \leq
 J_\zeta \sqrt{\mu}$,
\end{enumerate}
then we can also {infer} something about the decay of the derivatives of $m(r)$. We
combine all properties (R5${}^\prime$)--(R5${}^\ppp$) in (R5).
 
\begin{proposition}\label{lem:mdecay}
If $\zeta \colon \RR^+_0 \to \RR$ is an admissible reduced Hamiltonian, that is, it
satisfies the properties \emph{(R1)--(R4)} as well as
\begin{enumerate}
   \item[(R5)] There exist positive constants $J_\zeta, K_\zeta, L_\zeta$ such that
   \[
    J_\zeta \sqrt{\mu} - K_\zeta \leq \zeta(\mu) \leq J_\zeta \sqrt{\mu}, \qquad
    \text{and} \qquad J_\zeta \sqrt{\mu} - 2 L_\zeta \leq 2 \zeta'(\mu)\mu \leq
    J_\zeta \sqrt{\mu},
   \]
\end{enumerate}
then the cumulative mass function is of the form
\[
  m(r) = m_0 + \frac{1}{2c^2}\int_0^r \zeta \Big( \frac{Q^2}{2s^4} \Big) s^2 ds 
  = m_0 + \frac{J_\zeta |Q|}{\sqrt{2} c^2} r - \frac{K_\zeta}{3 c^2} r^3 + O_2(r^3),
  \qquad \text{as} ~ r \downarrow 0,
\]
where we say that $f(r) = O_k(r^\alpha)$ as $r\downarrow 0$ if 
$r^{j-\alpha} \frac{d^j f}{dr^j}$ is bounded for $j=0,\ldots, k$ as $r\downarrow 0$.
\end{proposition}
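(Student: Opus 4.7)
The plan is to start from the definition \eqref{mass} of $m(r)$ and carry out a direct two-sided estimate of the relevant integrals and derivatives using the explicit bounds in (R5), changing variables so that the pair of inequalities $J_\zeta\sqrt{\mu}-K_\zeta\le\zeta(\mu)\le J_\zeta\sqrt{\mu}$ becomes an estimate on $\zeta\bigl(\tfrac{Q^2}{2s^4}\bigr)s^2$. Since $\sqrt{\tfrac{Q^2}{2s^4}}=\tfrac{|Q|}{\sqrt{2}\,s^2}$, this product is pinched between $\tfrac{J_\zeta|Q|}{\sqrt{2}}-K_\zeta s^2$ and $\tfrac{J_\zeta|Q|}{\sqrt{2}}$, so an elementary integration from $0$ to $r$ gives
\[
 m_0+\frac{J_\zeta|Q|}{\sqrt{2}\,c^2}r-\frac{K_\zeta}{3c^2}r^3\le m(r)\le m_0+\frac{J_\zeta|Q|}{\sqrt{2}\,c^2}r,
\]
which immediately yields the $j=0$ part of the claim, that is, $m(r)-m_0-\tfrac{J_\zeta|Q|}{\sqrt{2}\,c^2}r+\tfrac{K_\zeta}{3c^2}r^3=O(r^3)$.

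Next I would differentiate the integral representation to obtain the pointwise identities
\[
 m'(r)=\frac{r^2}{c^2}\zeta\bigl(\tfrac{Q^2}{2r^4}\bigr),\qquad m''(r)=\frac{2r}{c^2}\Bigl[\zeta\bigl(\tfrac{Q^2}{2r^4}\bigr)-2\mu\zeta'(\mu)\big|_{\mu=Q^2/(2r^4)}\Bigr],
\]
the second coming from $\frac{d\mu}{dr}=-\frac{4\mu}{r}$. Re-using the $\zeta$-bound in (R5) on $m'$ gives $\bigl|m'(r)-\tfrac{J_\zeta|Q|}{\sqrt{2}\,c^2}\bigr|\le\tfrac{K_\zeta}{c^2}r^2$, which is exactly the $j=1$ derivative bound $r^{-2}(m'(r)-\tfrac{J_\zeta|Q|}{\sqrt{2}\,c^2}+\tfrac{K_\zeta}{c^2}r^2)=O(1)$ required by the $O_2(r^3)$ remainder. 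For $m''$ I combine the two inequalities of (R5): subtracting $J_\zeta\sqrt{\mu}-2L_\zeta\le 2\mu\zeta'(\mu)\le J_\zeta\sqrt{\mu}$ from $J_\zeta\sqrt{\mu}-K_\zeta\le\zeta(\mu)\le J_\zeta\sqrt{\mu}$ causes the $J_\zeta\sqrt{\mu}$ terms to cancel, leaving the bounded estimate $-K_\zeta\le\zeta(\mu)-2\mu\zeta'(\mu)\le 2L_\zeta$. Plugging this into the formula for $m''$ gives $|m''(r)|\le\tfrac{2r}{c^2}\max(K_\zeta,2L_\zeta)$, and the triangle inequality against the explicit second derivative $-\tfrac{2K_\zeta}{c^2}r$ of the proposed expansion finishes the $j=2$ derivative bound.

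The computational steps are routine once the identities for $m'$ and $m''$ are isolated; the one point that requires care, and really motivates the second inequality in (R5), is the cancellation in the $m''$ estimate. Without the matching upper bound on $2\mu\zeta'(\mu)$ and lower bound on $\zeta(\mu)$ aligned at the common leading term $J_\zeta\sqrt{\mu}$, the factor $\zeta(\mu)-2\mu\zeta'(\mu)$ would inherit the $\sqrt{\mu}\sim r^{-2}$ blow-up and $m''$ would only be $O(r^{-1})$, destroying the $O_2(r^3)$ conclusion. So the main conceptual step is to recognize that (R5) is precisely engineered so that this leading singular part of $\zeta$ and of $2\mu\zeta'$ cancels, turning $m''$ into a genuinely $O(r)$ quantity compatible with the claimed Taylor-type expansion.
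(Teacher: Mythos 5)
Your proposal is correct and follows essentially the same route as the paper's own proof: two-sided bounds on $m$, $m'$, and $m''$ obtained by substituting $\mu = Q^2/(2s^4)$ into the (R5) inequalities, with the crucial cancellation of the $J_\zeta\sqrt{\mu}$ terms between $\zeta(\mu)$ and $2\mu\zeta'(\mu)$ in the formula $m''(r)=\tfrac{2r}{c^2}\bigl[\zeta(\mu)-2\mu\zeta'(\mu)\bigr]$ yielding the $O(r)$ bound on $m''$. Your closing remark correctly identifies why the second inequality in (R5) is needed, which is exactly the mechanism the paper exploits.
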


\begin{remark}
By the first part of (R2), in particular, $\zeta'(\mu)  \geq 0$ and thus also 
$\zeta(\mu) \geq 0$ by the second part for all $\mu \geq 0$. Next, consider 
$f(\mu) = \log \frac{\zeta(\mu)}{\mu}$. Then $f(0) = \log 1 = 0$ by (R1) and by (R2)
\[
  f'(\mu) = \frac{\mu \zeta'(\mu) - \zeta(\mu)}{\mu \zeta(\mu)} \leq 0.
\]
Hence by integration also $f(\mu) \leq 0$ and therefore $\zeta(\mu) \leq \mu$ for all
$\mu \geq 0$.  Together with the first part of (R5) we thus obtain for $\mu \geq 0$ 
that
\beq\label{zeta1}
  \max\{0,J_\zeta \sqrt{\mu} - K_\zeta\} \leq \zeta(\mu) \leq 
  \min\{\mu,J_\zeta \sqrt{\mu}\}.
\eeq
Similarly, (R2) implies that $0 \leq \zeta'(\mu) \mu \leq \zeta(\mu)$ so that 
together with the second part of (R5) we have for $\mu \geq 0$ that
\beq\label{zeta2}
  \max\{0,J_\zeta \sqrt{\mu} - 2 L_\zeta\} \leq 2 \zeta'(\mu)\mu \leq 
  \min\{2\mu, J_\zeta \sqrt{\mu}\}.
\eeq
We will use these inequalities in our proof of Proposition~\ref{lem:mdecay} below.
\end{remark}

\begin{proof}
As we have already seen in \eqref{mup}--\eqref{mlow} the first part of (R5) implies 
that
\[
  m_0 + \frac{J_\zeta |Q|}{\sqrt{2}c^2} r - \frac{K_\zeta}{3c^2} r^3 \leq  m(r) 
  \leq m_0 + \frac{J_\zeta |Q|}{\sqrt{2}c^2} r,
\]
and thus shows that
\[
0 \leq r^{-3} \left[ m(r) - \left( m_0 + \frac{J_\zeta |Q|}{\sqrt{2}c^2} r 
- \frac{K_\zeta}{3c^2} r^3 \right) \right] \leq \frac{K_\zeta}{3c^2}, 
\]
remains bounded. Since
$
  m'(r) = \zeta\big( \tfrac{Q^2}{2r^4 c^2} \big) r^2,
$
using the first part of (R5) we again obtain that
$
  \frac{J_\zeta |Q|}{\sqrt{2} c^2} - \frac{K_\zeta r^2}{c^2} \leq m'(r) 
  \leq \frac{J_\zeta |Q|}{\sqrt{2} c^2},
$
hence
\[
  0 \leq r^{-2} \frac{d}{dr} \left[ m(r) - \left( m_0 + 
  \frac{J_\zeta |Q|}{\sqrt{2} c^2} r - \frac{K_\zeta}{3 c^2} r^3 \right) \right] 
  \leq \frac{K_\zeta}{c^2}.
\]
The second derivative of $m(r)$ satisfies
\[
 - \frac{2r K_\zeta}{c^2} \leq m''(r) = \frac{2r}{c^2} \zeta \Big( \frac{Q^2}{2r^4} \Big) - 4 \zeta' 
  \Big( \frac{Q^2}{2r^4} \Big) \frac{Q^2}{2 r^{4}} \frac{r}{c^2}
  \leq \frac{\sqrt{2} J_\zeta |Q|}{r c^2} - \frac{\sqrt{2} J_\zeta |Q|}{r c^2}
  + \frac{4 L_\zeta r}{c^2} = \frac{4 L_\zeta r}{c^2},
\]
and thus
\[
  0 \leq r^{-1} \frac{d^2}{dr^2} \left[ m(r) - \left( m_0 + 
  \frac{J_\zeta |Q|}{\sqrt{2} c^2} r - \frac{K_\zeta}{3 c^2} r^3 \right) \right] 
  = r^{-1} \left[ m''(r)+ \frac{2K_\zeta}{c^2} r \right] \leq \frac{4L_\zeta 
  + 2K_\zeta}{c^2}
\]
is bounded as well. Therefore, by definition of $O_2(r^3)$,
\[
 m(r) = m_0 + \frac{J_\zeta |Q|}{\sqrt{2}c^2} r - \frac{K_\zeta}{3 c^2} r^3 
 + O_2(r^3), \qquad \text{as } r \downarrow 0. \qedhere
\]
\end{proof}
 
With the results obtained for admissible nonlinear theories we are now in a position
to show that the weak second Bianchi identity does hold for spherically symmetric 
electrostatic spacetimes where the reduced Hamiltonian $\zeta$ satisfies (R1)--(R5).

\begin{customthm}{\ref{thm:es}}
Suppose $(\cM,\g,\F)$ is a electrostatic spherically symmetric spacetime considered
in Proposition~\ref{lem:mdecay} such that $m_0<0$ and there is a naked singularity at
the center. Then the twice-contracted second Bianchi identity holds weakly on $\cM$.
\end{customthm}
 
\begin{proof}
By Proposition~\ref{lem:mdecay}, $m(r) = m_0 + \frac{J_\zeta |Q|}{\sqrt{2} c^2} r 
- \frac{K_\zeta}{3 c^2} r^3 + O_2(r^3), \text{ as } ~ r \downarrow 0$. Hence the
second Bianchi identity is satisfied weakly also at the singularity due to
Corollary~\ref{cor1}.
\end{proof}
 
\begin{remark}
Note that even though the value of $m_0$ is not relevant for whether the Bianchi
identity holds or does not hold weakly, its sign does matter, since we use the radial
variable $r$ all the way down to $r=0$, which is not possible in the presence of a
horizon.
\end{remark}

\subsection{The RWN spacetime does not satisfy the weak second Bianchi identity}

In the previous subsection we identified a whole class of electrostatic spacetimes
for which the second Bianchi identity {\em does} hold weakly. Using spatially
conformally flat coordinates we {now} show that the RWN spacetime \emph{does not} 
satisfy the weak second Bianchi identity. Since this coordinate transformation is 
rather involved in practice, we also include a heuristic explanation of this ``too 
singular'' behavior of the RWN spacetime in terms of the blow-up rate of the
Kretschmann scalar.

We recall that the RWN metric is of the form
\[
   ds^2_\g = - e^{2\alpha(r)} c^2 dt^2 + e^{-2\alpha(r)} dr^2 + r^2 (d\vartheta^2 
   + \sin^2 \vartheta \, d\varphi^2), 
\]
where
 \[
  e^{2\alpha(r)} = 1 - \frac{2GM}{c^2 r} + \frac{GQ^2}{c^4 r^2} =: 1 - 
  \frac{2}{r} A + \frac{1}{r^2} B^2
 \]
with ADM mass $M$ and charge $Q$. We are interested in the superextremal case 
$\frac{A^2}{B^2} = \frac{GM^2}{Q^2} < 1$ which has a naked singularity in the center,
and will show that in this case the second Bianchi identity does not hold weakly.
 
Note that the cumulative mass function
 \[
  m(r) = M - \frac{Q^2}{2c^2 r}
 \]
blows up at the center and hence does \emph{not} have the required asymptotics 
\eqref{asymp:m} discussed earlier. However, the integral appearing in the definition
of $\rho$ in \eqref{eq:rhomass}, i.e.,
 \[
  \int_0^r \frac{dr'}{\sqrt{r'^2 - \frac{2G}{c^2} r'm(r')}} = \int_0^r 
  \frac{dr'}{\sqrt{r'^2 - 2Ar'+B^2}}
 \]
does converge for $r<B = \frac{\sqrt{G}|Q|}{c^2}$ since the denominator satisfies
 \[
  0 < (r'-B)^2 < r'^2 - 2Ar' + B^2 < (r'-A)^2
 \]
due to superextremality. Hence a coordinate transformation to spatially conformally
flat coordinates $(t,\rho,\vartheta,\varphi)$ is possible near the singularity.
 
If we set
 \[
  \rho_0 := B-A = \frac{\sqrt{G}|Q|-G M}{c^2}>0
 \]
then the change of coordinates \eqref{eq:rhomass} is given by (compare to the 
calculation of the negative mass Schwarzschild case in Section~\ref{sec:mass})
 \bea
  \rho(r) = (r-A) + \sqrt{(r-A)^2 + (B^2 - A^2)}, \qquad r 
  =  \frac{(\rho+A)^2-B^2}{2\rho}.
 \eea
The conformal factor, $\phi$, is then
 \[
  \phi(\rho)^2 = \frac{r}{\rho} = \frac{(\rho+A)^2-B^2}{2\rho^2} = O(\rho-\rho_0), 
 \]
and 
 \[
  e^{2\gamma(\rho)} = e^{2\alpha(r(\rho))} = 1 - \frac{2}{\rho\phi^2} A 
  + \frac{1}{\rho^2\phi^4} B^2 
  = \frac{\rho^2 \phi^4 - 2\rho\phi A + B^2}{\rho^2 \phi^4} = O((\rho-\rho_0)^{-2}).
 \]
where the blow-up rate follows from the above since $A < B$. Note that some of these
decay/blow-up rates already differ from the requirement in Theorem~\ref{thm:bianchi}
(i) and (ii). Without looking into all the requirements of Theorem~\ref{thm:bianchi}
separately, we directly jump to investigate the critical quantity 
$G^\rho{}_\rho e^\gamma \phi^6$ appearing in \eqref{bint}. To this end we observe 
that
\bea
  \gamma' &=& \frac{(e^{2\gamma})'}{2e^{2\gamma}} = (e^{-2\gamma}-1) \left( 
  \frac{1}{\rho}+2 \frac{\phi'}{\phi} \right) = O((\rho-\rho_0)^{-1}),
\eea
since 
\bea
   \frac{\phi'}{\phi} = \frac{(\phi^2)'}{2\phi^2} 
   = \frac{(B^2 - A^2) - 2A\rho}{2\rho ((\rho+A)^2 - B^2)} = O((\rho-\rho_0)^{-1}).
\eea
Thus by \eqref{GG}
\bea
  G^\rho {}_\rho
 &=& \frac{1}{\rho^2\phi^4 - 2\rho\phi A + B^2} \left(2 + 8\rho\frac{\phi'}{\phi} 
 + 8\rho^2\left( \frac{\phi'}{\phi} \right)^2 \right)
- \frac{1}{\rho^2\phi^4} - \frac{4}{\phi^4} \left( \frac{\phi'}{\phi} \right)^2,
 \eea
hence
$
  G^\rho{}_\rho = O((\rho-\rho_0)^{-4})  
$
and 
$
  G^\rho{}_\rho e^\gamma \phi^6 = O((\rho-\rho_0)^{-2}).
$
Therefore, the second term in \eqref{bint} does not converge to zero, and the second
Bianchi identity does not hold weakly. In fact, even the inhomogeneous second Bianchi
identity does not hold weakly.


\section{Summary and Outlook}
\label{summary}

In this paper we have considered the following question:Under what conditions is the
twice-contracted second Bianchi identity satisfied in a weak sense in a neighborhood
of a singular line of a spacetime $\cM$ with the metric $\g$?
We were able to answer this question in case $(\cM,\g)$ is both static and spherically 
symmetric, by finding sufficient conditions on the metric that, if satisfied, guarantee 
that the weak second Bianchi identity holds everywhere, the location of a timelike 
singularity included. 

The main application of this result is to electrovacuum spacetimes with timelike
singularities. We have shown that the Einstein--Maxwell equations, complemented with
a nonlinear vacuum law which satisfies certain admissibility conditions, have
spherically symmetric, static solutions describing the electrostatic spacetime of a
point charge with weakly satisfied twice-contracted second Bianchi identity. We also
found that the Bianchi identity is not weakly satisfied by the 
Reissner--Weyl--Nordstr\"om (RWN) solution, which is obtained when complementing the
Einstein--Maxwell equations with the standard linear vacuum law of Maxwell. The
favorable electrostatic spacetimes turn out to be less singular than RWN, a fact 
that is evident from the blow-up behavior of their curvature invariants. In our setting,
for example, the Kretschmann scalar of an electrostatic spacetime with weakly
satisfied twice-contracted second Bianchi identity blows up at most like $r^{-6}$ as
$r \downarrow 0$, while it blows up as $r^{-8}$ in the RWN solution. In the case of a
vanishing bare rest mass, i.e., $m_0=0$, the blow-up rate is only $r^{-4}$ as 
$r \downarrow 0$, leading to the mildest possible (a conical) singularity.

Our findings add another argument to the many that have already been offered for why
Maxwell's linear vacuum law \eqref{MV} should be replaced by a nonlinear law that 
reduces to \eqref{MV} in the weak-field limit (for in this limit the Maxwell--Maxwell
electromagnetic field equations are indisputably successful), see \cite{Mie}, and
which furnishes finite field energies of point charges (unlike Maxwell's law of the 
vacuum), see \cites{B,BI} for the most prominent earliest voices in this regard. The
family of possible laws that allow for a weak twice-contracted second Bianchi
identity is huge, so that one has to look elsewhere for arguments that could help
narrowing down the list of potential candidates. The Born--Infeld law \cite{BI}
stands out in this regard because it follows from a handful of compelling principles,
see \cites{B,P}, each of which seemingly capturing some aspect of nature. 
Since all these models depend on at
least one extra parameter, and reduce to Maxwell's law in the limit where this
parameter vanishes, experimental results can restrict the realm of possible parameter
values for each model, and possibly rule out specific models, but it is difficult to 
see how empirical results alone could hint at the ``right'' nonlinear model. To find 
the right one --- if this is indeed the message --- one needs to argue based on
deeper plausible principles, as Born and Infeld tried, and Plebanski \cite{P} and
Boillat \cite{Boi1970} did.

The conditions that yield a weak second Bianchi identity, which we derived in
Theorem~\ref{thm:bianchi} and several corollaries, can be applied more generally, to
static spherically symmetric Lorentzian manifolds with a singularity in the center.
We note that the proof of our Corollary~\ref{cor1} does, in general, \emph{not}
extend to the case $m_0=0$, because the coordinate transformation from $r$ to $\rho$
would involve the infinite integral $\int_0^r \frac{dr'}{r'}$. However, we believe a 
result similar to Corollary~\ref{cor1} can be obtained as long as $m_0 \leq 0$ and
$m(r)=O(r^\kappa)$, $0\leq \kappa < 1$, as $r\downarrow 0$. Moreover, there are cases
with zero bare mass $m_0=0$ and conical singularities that do admit a transformation
to spatially conformally flat coordinates \eqref{gagain} if we allow $\rho_0=0$ and 
possibly interpret the Bray mass using the more general definition of 
\cite{BJ}*{Section 3.2} via approximation of regular ones. Examples are an explicit
singular solution of the astrophysically important Tolman--Oppenheimer--Volkoff
equation, studied already by Chandrasekhar~\cite{Chand} and others~\cites{AB,Mak} in
view of its asymptotics and discussed below, and also the Hoffmann spacetime
discussed in \cites{H,TZ1}.

\begin{example}[A singular static spherically symmetric fluid with vanishing bare 
mass that satisfies the weak second Bianchi identity]
For solutions of the Einstein--Euler equations with linear equation of state
$p=(kc)^2\mu$, $k\in(0,1)$, with conical singularity described in 
\cite{AB}*{Section 3.2.1} the cumulative mass function is of the form
\[
  { m(r) = \frac{Kc^2}{2G} r,\qquad K:=\frac{4k^2}{4k^2+(1+k^2)^2}, \qquad 
  r \in (0,\infty),}
\]
and hence goes to zero when $r\downarrow 0$. Thus $e^{-2\beta} = 1 - \frac{2Gm}{c^2} 
= 1 - K <1$. Hence the transformation to spatially conformally flat coordinates
\eqref{gagain} is given by $\rho=r^{\frac{1}{\sqrt{1-K}}}$ and 
$\phi^2 = \rho^{\sqrt{1-K}-1}$ if we allow $\rho_0=0$. In this case the weak version
of the second Bianchi identity can be obtained directly in the $(t,r,\vartheta,\phi)$
coordinates since $r^2 e^{\beta(r)} dr = \rho^2 {\phi^6} d\rho$ and 
$\alpha(r) = \gamma(\rho)$ etc. In particular, the limit in \eqref{bint} translates
to
\bea
 \lim_{\varepsilon \to 0} \varepsilon^2 G^r{}_r(\varepsilon)e^{\alpha(\varepsilon)
 +\beta(\varepsilon)} 
 = \lim_{\varepsilon\to0} e^{\alpha(\varepsilon)-\beta(\varepsilon)} 
 \left(1 - e^{2\beta(\varepsilon)} + \frac{4k^2}{1+k^2}\right) 
 = C_k \lim_{\varepsilon\to 0}\varepsilon^{\frac{2k^2}{1+k^2}} = 0,
\eea
where $C_k$ is a constant depending on $k$. This ends our example.
\end{example}

Furthermore, it is clear that the main ideas developed in this paper are not
restricted to static spherically symmetric spacetimes and are adaptable to more
general situations. In particular, we expect that our results can be extended to 
non-static, non-symmetric spacetimes with finitely many timelike singularities, which
appear as point-type, or perhaps (st)ring-type singularities in the spacelike leaves
of any foliation of the spacetime into ``evolving spaces''; for a study of the 
equation of motion of singularities of the latter type, see \cite{AD}. We expect that
the less severe blow-up behavior demanded by the weak Bianchi identity will point the 
way to the formulation of a well-posed dynamical theory for charged timelike
singularities and the electromagnetic spacetime structures around them. By requiring
compatible singularities in the electromagnetic energy-momentum-stress tensor this in
turn should lead to the identification of an admissible class of electromagnetic
vacuum laws. Our preliminary inquiry in this direction also indicates that the
admixture of a scalar field that modulates the gravitational coupling of the
electromagnetic field energy-momentum-stress tensor to the spacetime curvature may be
needed. Moreover, due to the occurrence of off-diagonal components in the Einstein 
tensor, it is reasonable to expect that more restrictions on the metric may be
required in order to obtain a broadly applicable result analogous to 
Theorem~\ref{thm:bianchi}.

In all these cases we also expect the bare mass of the singularity to be strictly
negative. We recall that it had to be strictly negative in the spacetimes studied in
the present paper for the weak twice-contracted second Bianchi identity to hold with
a rigorous geometric interpretation of the bare rest mass as Bray's ZAS mass.
Moreover, in the general-relativistic setting we expect positive bare mass to imply a
black hole (cf. the discussion in \cite{GW}), not a naked singularity, and as such
could not serve as a suitable point-charge model of physical ``particles'' like 
nuclei, or electrons.

To this we add the following thought: Given the spectacular high precision agreement 
of quantum-mechanical computations of atomic spectra with the empirical ones,
modifications of Maxwell's vacuum law would have to be significant only in the
immediate vicinity of a point charge. This in turn suggests that the electromagnetic
self-field energies, though finite, will still be huge, and this in turn implies that
the bare mass would have to be negative, so that the total mass agrees with the
empirical mass as obtained in scattering experiments. This is an argument for why
even in special relativity a consistent electrodynamical theory of fields and their
point charge sources that also agrees with observations would have to be formulated
with negative bare mass. Our finding that we were able to establish the weak second
twice-contracted Bianchi identity only for naked singularities with negative bare
mass seems fitting.


\end{document}